\begin{document}

\title{Popular Matching in Roommates Setting is NP-hard}

\author{
Sushmita Gupta\thanks{University of Bergen, Bergen, Norway. \texttt{sushmita.gupta@ii.uib.no}}
 \and Pranabendu Misra\thanks{University of Bergen, Bergen, Norway. \texttt{pranabendu.misra@ii.uib.no}}
 \and  Saket Saurabh\thanks{The Institute of Mathematical Sciences, HBNI, Chennai, India. \texttt{saket@imsc.res.in}}
 \and Meirav Zehavi\thanks{Ben-Gurion University, Beersheba, Israel. \texttt{meiravze@bgu.ac.il}} 
}

\maketitle

\thispagestyle{empty}

\begin{abstract} 

An input to the \popular\ problem, in the roommates setting, consists of a graph $G$ and each vertex ranks its neighbors in strict order, known as its preference. In the  \popular\ problem the objective is to {\em test} whether there exists a matching $M^\star$ such that there is no matching $M$ where more people are happier with $M$ than with $M^\star$.  In this paper we settle the computational complexity of the \popular\ problem in the roommates setting by showing that the problem is \NPC. Thus, we resolve 
an open question that has been repeatedly, explicitly asked over the last decade. 



\end{abstract}

\newpage
\pagestyle{plain}
\setcounter{page}{1}


\newcommand{\Ma}[1]{\textcolor{magenta}{#1}}
\newcommand{\ma}[1]{\todo[line, color={Yellow}]{\footnotesize{#1}}\xspace}
\newcommand{\il}[1]{\todo[inline, color={Yellow}]{{#1}}}
\newcommand{\etalcite}[1]{\emph{et~al.}~\cite{#1}}
\newcommand{\etal}{\emph{et~al.}\xspace}

\section{Introduction}\label{sec:intro}
Matching problems with preferences are ubiquitous in everyday life scenarios. They arise in applications such as the assignment of students to universities, doctors to hospitals, students to campus housing, pairing up police officers, kidney donor-recipient pairs and so on. The common theme is that individuals have preferences over the possible outcomes and the task is to find a matching of the participants that is in some sense optimal with respect to these preferences. In this paper we study the computational complexity of computing one such solution concept, namely the  \popular ~problem. The input to the \popular\ problem consists of a graph on $n$ vertices and the preferences of the vertices represented as a ranked list of the neighbors of every vertex, said to be the {\it preference list} of the vertex. The goal is to find a {\it popular matching}--a matching that is preferred over any other matching (in terms of the preference lists) by at least half of the vertices in the graph.  

Popular matching finds real-life applications in avenues as diverse as the organ-donor exchange markets, spectrum sharing in cellular networks, barter exchanges, to just name a few \cite{Manlove13b,XiaoHanYuenDaSilva16}. Specifically, situations in which a {\it stable matching} -- a matching that does not admit a \emph{blocking edge}, i.e. an edge whose endpoints prefer each other to their respective ``situation'' in the current matching -- is too restrictive, popular matching finds applicability. It is known that stable matching is the smallest sized popular matching. So for applications where it is desirable to have matchings of larger size than a stable matching -- for instance, allocating projects to students, or pairing up police officers, where the absence of blocking edges is not mandatory -- popular matching may be a suitable alternative. The notion of popularity captures a natural relaxation of the notion of stability: blocking edges are permitted but the matching, nevertheless has overall stability

To define the \popular\ problem formally, we first need few  definitions. 
Let $N_G(v)$ denotes the neighborhood of a vertex $v\in V(G)$.  Given a vertex $v\in V(G)$, a {\em preference list of $v$ in $G$} is a {\em bijective} function $\ell_v: N_G(v)\rightarrow \{1,2,\ldots,|N_G(v)|\}$. Informally, the smaller the number a vertex $v\in V(G)$ assigns to a vertex $u\in N_G(v)$, the more $v$ prefers to be matched to $u$. In particular, for all $u,w\in N_G(v)$, if $\ell_v(u)<\ell_v(w)$, then $v$ prefers $u$ over $w$. A matching $M$ in $G$ is a subset of $E(G)$ whose edges are pairwise disjoint. We say that a vertex $v\in V(G)$ is matched by a matching $M$ if there exists a (unique) vertex $u\in V(G)$ such that $\{u,v\}\in M$, which we denote by $u=M(v)$. 

In literature, the terminology related to \popular\ is closely related to that of the {\sc Stable Marriage} problem. 
When the input graph is (bipartite) arbitrary, the instance is said to be that of the {\em (stable marriage)  roommates setting} of the problem.  
We denote an instance of \popular\ (in the roommates setting) by $I=(G,L=\{\ell_v: v\in V(G)\})$. Roughly speaking, a vertex $v\in V(G)$ prefers a matching $M$ over a matching $M'$ if its ``status'' in $M$ is better than the one in $M'$, where being not matched is the least preferred status. Formally, the notion of preference over matchings is defined as follows.
Given two matchings in $G$, denoted by $M$ and $M'$, we say that a vertex $v\in V(G)$ {\em prefers} $M$ over $M'$ if one of the following conditions is satisfied: {\em (i)} $v$ in matched by $M$ but not matched by $M'$; {\em (ii)} $v$ is matched by both $M$ and $M'$, and $\ell_v(M(v))<\ell_v(M'(v))$.  We say that $M'$  is {\em more popular} than $M$, if the number of vertices that prefer $M'$ 
to $M$ exceeds the number of vertices that prefer $M$ to $M'$. A matching $M$ is {\em popular} if and only if there is no matching $M'$ that is more popular than $M$. 
In the decision version of the \popular\ problem, given an instance  $I=(G,L=\{\ell_v: v\in V(G)\})$, 
{\em the question is whether  there exists a popular matching?}

%

\medskip



\noindent 
{\bf History of the problem and our result.} 
The provenance of the notion of a popular matching can be dated to the work of Condorcet in 1785 on the subject of a {\it Condorcet winner} \cite{Condorcet1785}. In the last century, however, the notion was introduced as the {\it majority assignment} by G\"{a}rdenfors \cite{Gardenfors75} in 1975. Anecdotal retelling ascribes the coinage of the term ``popular matching'' and the associated question of does there exist a polynomial time algorithm for \popular\, in the ``housing allocation'' setting,  to Robert Irving during a talk in University of Glasgow in 2002, \cite[pg 333]{Manlove13b}. In 2005, Abraham \etalcite{Abraham07} was the first to discuss an efficient algorithm for computing a popular matching albeit for the case where the graph is bipartite and only the vertices in one of the partitions have a preference list, a setting known as the {\em housing allocation}. The persuasive motivation and elegant analysis of Abraham \etal led to a spate of papers on popular matching \cite{ManloveSng06,HuangKavithaMichailNasre08,KavithaN09,KavithaMestreNasre11,OptimalPM09,BiroIrvingManlove10,HuangKavitha17,KiralyKarkus17} covering diverse settings that include strict preferences as well as one with ties. It is well-known that when the input graph is bipartite--the stable marriage setting, \popular\ problem can always be decided affirmatively in polynomial time, \cite{DBLP:journals/iandc/HuangK13}. 
It is equally well-known that when the graph is arbitrary, the computational complexity of deciding whether a popular matching exists is unknown. In particular, whether \popular\ is \NPH\ has been repeatedly, explicitly asked as an open question over the last decade 
~\cite{EgresWiki,BiroIrvingManlove10,CsehTalk2015,Cseh-survey,CsehKavitha16,HuangKavitha13j,DBLP:journals/iandc/HuangK13,HuangKavitha17,Size-Popularity-Tradeoff-Kavitha14j,DBLP:journals/corr/abs-1802-07440,KiralyKarkus17,ManloveSummerSchool2013,Manlove13b}. 
Indeed, it has been stated as one of the main open problem in the area (see the aforementioned citations). In this paper we settle this question by proving the following result. 
\begin{theorem}\label{thm:main}
\popular\ is \NPC.
\end{theorem}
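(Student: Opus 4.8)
The plan is to prove the two halves of Theorem~\ref{thm:main} separately: membership in \NP, and \NP-hardness.

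\textbf{Membership in \NP.} The natural certificate for a \yes-instance is a popular matching $M^\star$ itself, so the only thing to establish is that testing popularity can be done in polynomial time. The key observation is that whether a matching $M'$ is more popular than $M^\star$ depends only on the symmetric difference $M^\star \oplus M'$, which decomposes into vertex-disjoint $M^\star$-alternating paths and cycles; the total vote difference $\Delta(M',M^\star)$ is the sum of the independent contributions of these components. Since flipping a single component yields a valid matching, $M^\star$ fails to be popular if and only if there exists a \emph{single} alternating path or cycle with strictly positive vote contribution, where each incident vertex contributes $+1$, $-1$, or $0$ according to the definition of ``prefers'' given above. Detecting such a positive-gain alternating structure can be cast as a weighted matching / shortest-path computation and solved in polynomial time, so I would either invoke the known polynomial-time popularity test in the roommates setting or give a short self-contained argument. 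This places \popular\ in \NP.

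\textbf{\NP-hardness via reduction.} For hardness I would reduce from \pvc\ (vertex cover with budgets on a prescribed partition of the vertex set), which is \NPH. Given such an instance I would construct a \popular\ instance $I=(G,L)$ from three kinds of gadgets wired together: \emph{(i)} a vertex-selection gadget for each vertex of the input graph, whose only locally popular behaviours force a binary ``in the cover / out of the cover'' decision; \emph{(ii)} an edge gadget for each edge, built around the classic odd-cycle configuration (e.g.\ the cyclically-ranked triangle) that admits \emph{no} popular matching in isolation, so that the gadget can be stabilized only when at least one of its two endpoints has been selected; and \emph{(iii)} counting/budget gadgets that aggregate the selection decisions on each side of the partition and permit a globally popular matching only when the number of selected vertices respects the prescribed budget.

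\textbf{Equivalence and the main obstacle.} Correctness has two directions. In the forward direction, from a valid partitioned vertex cover I would read off the selection decisions, stabilize every edge gadget (possible precisely because each edge is covered), and verify via the characterization above that no alternating path or cycle has positive gain, so the resulting matching is popular. In the reverse direction, I would argue that any popular matching must, within each gadget, realize one of the intended canonical configurations---otherwise a local alternating structure would witness a more popular matching---and that the induced set of ``selected'' vertices is a cover respecting both budgets. The main obstacle, and the technical heart of the argument, is exactly this gadget design together with the reverse direction: because popularity is a \emph{global} property obtained by summing votes over the entire graph, choices in different gadgets interact, and I must control the ``vote leakage'' between gadgets tightly enough that the global balance faithfully encodes the covering constraint and the budget constraints simultaneously. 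Ensuring that no beneficial alternating path can thread through several gadgets to harvest votes in an unintended way---i.e.\ making the gadgets effectively self-contained---is the crux on which the whole reduction rests.
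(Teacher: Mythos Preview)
Your high-level plan matches the paper's: reduce from \pvc, use a per-vertex selection gadget, and then further gadgets to enforce the partition constraints. The \NP\ membership sketch is fine (the paper essentially relies on Proposition~\ref{prop:char} for the same purpose). But two aspects of your gadget layout are off in ways that would likely derail the reduction.

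First, you place the troublemaker triangle in the \emph{edge} gadget. In the paper the edge gadget is the simplest piece: for $e=\{i,j\}$ just two vertices $u^e_i,u^e_j$ who rank each other first and $b_i$ (resp.\ $b_j$) second. Coverage is enforced not by an unstable odd cycle but by an alternating path that gains $+6$ against $-4$ votes when neither endpoint is selected. The cyclically-ranked triangles appear instead inside the \emph{Pair Selector} and \emph{Triple Selector} gadgets, where their purpose is to forbid selecting \emph{all} vertices of a pair or triple: if both $i,j$ of a pair are in the cover, three leftover vertices sit on a cyclic triangle with no popular edge.

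Second, and more importantly, your item (iii) speaks of ``counting/budget gadgets that aggregate the selection decisions.'' This misses the entire reason the paper introduces \pvc\ rather than ordinary {\sc Vertex Cover}: with \pvc\ there is \emph{no global budget} to aggregate. The size constraint $|U|=|\mathcal P|+2|\mathcal T|$ is automatic once every pair contributes exactly one and every triple exactly two, so only \emph{local} ``at most one / at most two'' constraints need to be encoded, and these are exactly what a single troublemaker triangle per pair (and a carefully oriented pair of triangles per triple) can enforce. Designing a gadget that globally counts selected vertices and stays popular is precisely the obstacle the paper sidesteps; your proposal reinstates it.

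Finally, even within the local selector gadgets the paper relies on a non-obvious detail you do not mention: in the Triple Selector the inner triangle on $d_i,d_j,d_k$ and the outer triangle on $c_i,c_j,c_k$ must be oriented in \emph{opposite} directions so that any attempt to gain votes by rotating one triangle loses at least as many votes on the other; without this the forward direction fails. Your proposal shows awareness that cross-gadget leakage is the crux, but not of this specific mechanism that resolves it.
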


\smallskip

\noindent 
{\bf Our method.}  An optimization question related to the \popular\ is about finding a popular matching of the largest size (as not all popular matchings are of same size). Let this problem be called {\sc Max Sized Popular Matching}. Until recently, it was also not known  whether this problem is \NPH\ in roommate setting. 
Recently, Kavitha investigated the computational complexity of {\sc Max Sized Popular Matching} in arbitrary graphs \cite{DBLP:journals/corr/abs-1802-07440}, and showed it to be \NPH.  This reduction serves as one of the main three gadgets 
in our reduction--the other two gadgets are completely new. The design of our reduction required several new insights. 
Firstly, our source problem is a ``{\sc 3-SAT}-like'' variant of {\sc Vertex Cover}, which allows us to enjoy benefits of both worlds: we gain both the lack of ``optimization constraints'' as in {\sc 3-SAT}, and the simplicity of {\sc Vertex Cover}. The usage of this source problem requires us to encode selection of {\em exactly} one ``element'' out of two, and exactly two ``elements'' out of three. Here, our gadget design is carefully tailored to exploit a known characterization of a popular matching. In particular,  
we make use of ``troublemaker triangles''--these are triangles consisting of three vertices, one of whom must be matched to a vertex  outside the triangle to give rise to a popular matching. We embed these triangles in a structure that coordinates the way in which they can be {\em traversed}.  Here, traversal precisely refers to the above mentioned characterization, which relies on the exposure of certain alternating paths and cycles in a graph associated with a candidate matching (to be a popular matching). Our gadgets lead traversals of such paths and cycles to dead-ends. We remark that when we describe our gadgets, we present additional intuitive explanations of their design.



\medskip

\noindent
{\bf Related results.} Chung~\cite{Chung00} was the first to study the \popular\ problem in the roommates setting. He observed that every stable matching is a popular matching. In the midst of a long series of articles, the issue of the computational complexity of \popular\ in an arbitrary graph remained unsettled, leading various researcher to devise notions such as the {\it unpopularity factor} and {\it unpopularity margin} \cite{HuangKavithaMichailNasre08,McCutchen08,HuangKavitha13j} in the hope of capturing the essence of popular matchings. A solution concept that emerged from this search is the {\it maximum sized popular matching}, motivated by the fact that unlike stable matchings (Rural Hospital Theorem \cite{Roth86}), all popular matchings in an instance do not match the same set of vertices or even have the same size. Thus, it is natural to focus on the size of a popular matching.  
There is a series of papers that focus on the {\sc Max Sized Popular Matching} problem in bipartite graphs (without ties in preference lists) \cite{Size-Popularity-Tradeoff-Kavitha14j,DBLP:journals/iandc/HuangK13,CsehKavitha16} and (with ties) \cite{PM-2sidedPref-1sidedTies-CsehHuangKavitha17j}. When preferences are strict, there are various polynomial time algorithm that solve {\sc Max Sized Popular Matching} in bipartite graphs: Huang and Kavitha \cite{HuangKavitha13j} give an $\mathcal{O}(mn_{0})$ algorithm that is improved by Kavitha to $\mathcal{O}(m)$ \cite{Size-Popularity-Tradeoff-Kavitha14j} where $m$ and $n_{0}$ denote the number of edges in the bipartite graph and the size of the smaller vertex partition, respectively. In the presence of ties (even on one side), the {\sc Max Sized Popular Matching}  was shown to be \NPH\ \cite{PM-2sidedPref-1sidedTies-CsehHuangKavitha17j}. It is worth noting that every stable matching is popular, but the converse is not true. As a consequence of the former, every bipartite graph has a popular matching that is computable in polynomial-time because it has a stable matching computable by the famous Gale-Shapley algorithm described in the seminal paper \cite{GaleShapley62} by the eponymous Gale and Shapley. 

\section{Preliminaries}\label{sec:prelims}

\paragraph{Standard Definitions and Our Notation.} Given a graph $G$, we let $V(G)$ and $E(G)$ denote the vertex set and edge set of $G$, respectively. Throughout the paper, we consider undirected simple graphs. We view an edge as a {\em set} of two vertices. A {\em triangle} in $G$ is a cycle in $G$ on exactly three vertices. The neighborhood of a vertex $v\in V(G)$ in $G$ is denoted by $N_G(v)=\{u\in V(G): \{u,v\}\in E(G)\}$, and the set of edges incident to $v$ in $G$ is denoted by $E_G(v)$. Given a vertex $v\in V(G)$, a {\em preference list of $v$ in $G$} is a {\em bijective} function $\ell_v: N_G(v)\rightarrow \{1,2,\ldots,|N_G(v)|\}$. Informally, the smaller the number a vertex $v\in V(G)$ assigns to a vertex $u\in N_G(v)$, the more $v$ prefers to be matched to $u$. In particular, for all $u,w\in N_G(v)$, if $\ell_v(u)<\ell_v(w)$, then $v$ prefers $u$ over $w$. A matching $M$ in $G$ is a subset of $E(G)$ whose edges are pairwise disjoint. We say that a vertex $v\in V(G)$ is matched by a matching $M$ if there exists a (unique) vertex $u\in V(G)$ such that $\{u,v\}\in M$, which we denote by $u=M(v)$. Moreover, $M$ is maximal if there is no edge in $E(G)$ such that both endpoints of that edge are not matched by $M$.

We denote an instance of \popular\ (in the roommates setting) by $I=(G,L=\{\ell_v: v\in V(G)\})$. Roughly speaking, a vertex $v\in V(G)$ prefers a matching $M$ over a matching $M'$ if its ``status'' in $M$ is better than the one in $M'$, where being not matched is the least preferred status. Formally, the notion of preference over matchings is defined as follows.

\begin{definition}\label{def:preference}
Let $I=(G,L=\{\ell_v: v\in V(G)\})$ be an instance of \popular. Given two matchings in $G$, denoted by $M$ and $M'$, we say that a vertex $v\in V(G)$ {\em prefers} $M$ over $M'$ if one of the following conditions is satisfied: {\em (i)} $v$ in matched by $M$ but not matched by $M'$; {\em (ii)} $v$ is matched by both $M$ and $M'$, and $\ell_v(M(v))<\ell_v(M'(v))$. The number of vertices in $V(G)$ that prefer $M$ over $M'$ is denoted by $\vote(M,M')$.
\end{definition}

Roughly speaking, $\vote(M,M')$ above can be thought of as the number of vertices that will vote in favor of $M$ when they are asked to decide whether $M$ or $M'$ should be chosen. For notational convenience, given a vertex $v\in V(G)$, we denote $\ell_v(v)=|N_G(v)|+1$, and given a matching $M$ where $v$ is not matched, we denote $M(v)=v$. Then, for example, the first condition in Definition \ref{def:preference} is subsumed by the second one. We now also formally define the notion of popularity.

\begin{definition}\label{def:popularity}
Let $I=(G,L=\{\ell_v: v\in V(G)\})$ be an instance of \popular. We say that a matching $M$ in $G$ is {\em popular} if $\vote(M',M)-\vote(M,M')\leq 0$ for any other matching $M'$ in $G$.
\end{definition}

Intuitively, the meaning of the definition above is that when the vertices are asked whether we should replace $M$ by $M'$, for any other matching $M'$, the number of vertices that will vote against the swap is at least as large as the number of vertices that will vote in favor of it. Let us recall that in the \popular\ problem, the objective is to decide whether there exists a popular matching.

Given a graph $G$, we say that a vertex $v\in V(G)$ {\em covers} an edge $e\in E(G)$ if $v$ is incident to $e$, that is, $v\in e$. A {\em vertex cover} $U$ in $G$ is a subset of $V(G)$ such that every edge in $E(G)$ is covered by at least one vertex in $U$. In the {\sc Vertex Cover} problem, we are given a graph $G$ and an integer $k$, and the objective is to decide whether $G$ has a vertex cover of size at most $k$.

\paragraph{Known characterization of popular matchings.} We need to present (known) definitions of a labeling of the edges in $E(G)$ as well as of a special graph derived from $G$ and a matching $M$ in $G$, which will give rise to a characterization of popular matchings.

\begin{definition}[Definition 2 in \cite{DBLP:journals/iandc/HuangK13}, Rephrased]\label{def:label}
Let $I=(G,L=\{\ell_v: v\in V(G)\})$ be an instance of \popular. Given a matching $M$ in $G$, the edge labeling $\lab_M: (E(G)\setminus M)\rightarrow \{-2,0,+2\}$ is defined as follows.
$$
\lab_M(\{u,v\})=
\begin{cases}
-2\ \ \ \mathrm{if}\ \ell_u(M(u))<\ell_u(v)\ \mbox{and }\ \ell_v(M(v))<\ell_v(u)\\
+2\ \ \ \mathrm{if }\ \ell_u(M(u))>\ell_u(v)\ \mbox{and }\ \ell_v(M(v))>\ell_v(u)\\
\ \ 0\ \ \ \mathrm{otherwise} 
\end{cases}
$$
\end{definition}

Intuitively, an edge in the definition above is assigned $-2$ if both its endpoints do not prefer being matched to each other over their status in $M$, and it is assigned $+2$ if both its endpoints prefer being matched to each other over their status in $M$. 

\begin{definition}[\cite{DBLP:journals/iandc/HuangK13}]\label{def:GM}
Let $I=(G,L=\{\ell_v: v\in V(G)\})$ be an instance of \popular. Given a matching $M$ in $G$, the graph $G_M$ is the subgraph of $G$ with $V(G_M)=V(G)$ and $E(G_M)=\{\{u,v\}\in E(G): \{u,v\}\in M$ or $\lab_M(\{u,v\})\neq -2\}$. 
\end{definition}

Before we can present the characterization, we need to define the notions of an alternating path and an alternating cycle in $G_M$. First, an {\em alternating cycle} in $G_M$ is a cycle in $G_M$ (with an even number of edges) such that if we traverse the edges of the cycle (in any direction), then every edge in $M$ is followed by an edge outside $M$, and every edge outside $M$ is followed by an edge in $M$. Similarly, an {\em alternating path} in $G_M$ is a path in $G_M$ such that if we traverse the edges of the path (in any direction), then every edge in $M$ is followed by an edge outside $M$ (with the exception of the last edge), and every edge outside $M$ is followed by an edge in $M$ (with the same exception), and in addition, if the edge incident to the first or last vertex on the path is not in $M$, then that vertex is not matched by $M$. Now, the characterization is given by the following proposition.

\begin{proposition}[Theorem 1 in \cite{DBLP:journals/iandc/HuangK13}, Rephrased]\label{prop:char}
Let $I=(G,L=\{\ell_v: v\in V(G)\})$ be an instance of \popular. A matching $M$ in $G$ is popular if and only if the following conditions hold in $G_M$.
\begin{itemize}
\item There is no alternating cycle in $G_M$ that contains at least one edge labeled +2 by $\lab_M$.
\item There is no alternating path in $G_M$ that starts from a vertex not matched by $M$ and contains at least one edge labeled +2 by $\lab_M$.
\item There is no alternating path in $G_M$ that contains at least two edges labeled +2 by $\lab_M$.
\end{itemize}
\end{proposition}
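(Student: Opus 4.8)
The plan is to reduce popularity, which quantifies over all matchings $M'$, to a statement about \emph{single} alternating paths and cycles, and then to translate the latter into the label-based conditions of the proposition. Fix a candidate matching $M$. For any matching $M'$ and vertex $v$, let $\delta(v)\in\{-1,0,+1\}$ record whether $v$ prefers $M'$ over $M$ ($+1$), prefers $M$ over $M'$ ($-1$), or is indifferent ($0$); with the conventions $M(v)=v$ and $\ell_v(v)=|N_G(v)|+1$, indifference is exactly $M(v)=M'(v)$. Then $\vote(M',M)-\vote(M,M')=\sum_{v}\delta(v)$, and $M$ is popular iff this sum is $\le 0$ for every $M'$. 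Since $\delta(v)=0$ whenever $M(v)=M'(v)$, only the vertices lying on the nontrivial components of the symmetric difference $M\triangle M'$ contribute, and each such component is an alternating path or cycle with respect to $M$. Conversely, flipping any single alternating path or cycle $C$ (for paths whose end-edges lie outside $M$, this requires the relevant endpoint to be $M$-unmatched, so that $M\triangle C$ is again a matching) realizes $C$ as $M\triangle(M\triangle C)$. Hence $M$ is popular if and only if $\sum_{v\in C}\delta(v)\le 0$ for every such \emph{flippable} alternating path or cycle $C$.

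Next I would compute $\sum_{v\in C}\delta(v)$ from the edge labels. The key identity is that for every edge $e=\{u,v\}\notin M$, regarded as an edge of $M'=M\triangle C$, the two endpoints contribute exactly $\delta(u)+\delta(v)=\lab_M(e)$: this is immediate from Definition~\ref{def:label} once one checks that an $M$-unmatched endpoint always prefers $e$ (so the $-2$ case forces both endpoints to be $M$-matched). Summing over $C$, the $\delta$-values of all $M'$-matched vertices total $\sum_{e\in C\setminus M}\lab_M(e)$, while each endpoint of $C$ whose incident edge lies in $M$ becomes unmatched under the flip and contributes $-1$. Writing $p$ for the number of $+2$-labeled edges of $C$ (and noting that an edge of $C$ outside $G_M$ would be labeled $-2$), this yields, for substructures inside $G_M$: a cycle has net vote $2p$; a path with both end-edges in $M$ has net vote $2p-2$; and a path with an $M$-unmatched endpoint has net vote $2p-1$ or $2p$. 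Requiring each of these to be $\le 0$ gives precisely the three bullet conditions—no $+2$ edge on a cycle, no $+2$ edge on a path from an unmatched vertex, and at most one $+2$ edge on any path.

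What remains, and what I expect to be the main obstacle, is to justify restricting attention from alternating structures in $G$ to those in $G_M$, i.e.\ to rule out positive-vote structures that use $-2$-labeled edges. The plan is a cutting argument: by the identity above, a $-2$-labeled edge has two $M$-matched endpoints, so in any flippable alternating structure it is \emph{internal}, flanked by its endpoints' $M$-edges. Deleting such an edge splits the structure into subpaths whose endpoints now carry $M$-edges; a direct check shows the removed $-2$ is compensated exactly by the two new endpoint contributions of $-1$, so the total net vote is unchanged (a cycle first becomes a single path of the same net vote). Iterating until no $-2$-labeled edge remains produces $(-2)$-free subpaths, which lie in $G_M$ and whose net votes sum to the original; if the original was positive, at least one of these $G_M$-substructures is positive as well. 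Therefore positive-vote structures in $G$ exist iff they exist in $G_M$, and combining this with the previous paragraph shows that $M$ is popular exactly when the three conditions hold in $G_M$. The delicate points to get right are the unmatched-vertex conventions in the label identity and the bookkeeping of endpoint contributions across each cut.
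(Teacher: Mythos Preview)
The paper does not prove this proposition at all; it is quoted as Theorem~1 of Huang and Kavitha, with only the remark that the bipartite statement carries over verbatim to general graphs. So there is no in-paper argument to compare against.

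That said, your sketch is essentially the standard proof and is sound. The reduction to single components via $M\triangle M'$, the identity $\delta(u)+\delta(v)=\lab_M(\{u,v\})$ for non-$M$ edges (with the observation that a $-2$-edge forces both endpoints to be $M$-matched), and the resulting net-vote formulas $2p$, $2p-2$, $2p-1$ are all correct and give exactly the three bullets. Your cutting step is also correct, and in fact slightly cleaner than you suggest: when you delete a $-2$-edge $\{u,v\}$, the $\delta$-values of $u$ and $v$ are \emph{literally unchanged} (each was $-1$ as an endpoint of a $-2$-edge, and each is still $-1$ as an $M$-edge endpoint that becomes unmatched under the subpath's flip), so no compensation bookkeeping is needed---the total is preserved on the nose. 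Iterating yields $(-2)$-free pieces in $G_M$ whose net votes sum to the original, and one of them is positive if the original was; this gives both directions.
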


We remark that the observation that Theorem 1 in \cite{DBLP:journals/iandc/HuangK13} holds for general graphs (its statement refers to bipartite graphs) is noted on page 6 of that paper. The usefulness of Proposition \ref{prop:char} for us is that it will help us verify that the matching we construct when we prove that forward direction of the correctness of our reduction is indeed popular. Note that, if we are to prove the popularity of matching by using only the definition of popularity, then we need to compare the matching to a huge number of other matchings (that can be of a super-exponential magnitude). Thus, Proposition \ref{prop:char} will come in handy.

\section{Definition of \pvc}\label{sec:partVC}

The correctness of our reduction will crucially rely on the fact that our source problem will {\em not} be {\sc Vertex Cover}, but a variant of it that we call \pvc. This variant is defined as follows.

\paragraph{Problem definition.} The input of \pvc\ consists of a graph $G$, a collection ${\cal P}$ of pairwise disjoint edges in $G$, and a collection ${\cal T}$ of pairwise disjoint sets of size 3 of vertices that induce triangles in $G$,\footnote{That is, for all $\{x,y,z\}\in{\cal T}$, we have that $\{x,y\},\{y,z\},\{z,x\}\in E(G)$.} such that every vertex in $V(G)$ occurs in either a triangle in ${\cal T}$ or an edge in ${\cal P}$ (but not in both). In other words, ${\cal T}\cup{\cal P}$ forms a partition of $V(G)$ into sets of sizes 3 and 2.

To ease readability, we will refer to a set (edge) in $\cal P$ as a {\em pair} and to a set in $\cal T$ as a {\em triple}. 

The objective of \pvc\ is to decide whether $G$ has a vertex cover $U$ such that the two following conditions hold.
\begin{enumerate}
\item\label{pvc:condition1} For every $P\in {\cal P}$, it holds that $|U\cap P|=1$.
\item\label{pvc:condition2} For every $T\in {\cal T}$, it holds that $|U\cap T|=2$.
\end{enumerate}
A vertex cover $U$ with the properties above will be referred to as a {\em solution}.

\paragraph{Remark and Hardness.} We remark that it will be crucial for us that {\em (i)} the sets in ${\cal T}\cup{\cal P}$ are all pairwise disjoint, {\em (ii)} the maximum size of a set in ${\cal T}\cup{\cal P}$ is only 3 and all but one of the vertices of a set in ${\cal T}\cup{\cal P}$ must be selected, and {\em (iii)} all solutions must have the same size, where the implicit size requirement (that is, being of size exactly $|{\cal P}|+2|{\cal T}|$) is automatically satisfied if Conditions \ref{pvc:condition1} and \ref{pvc:condition2} are satisfied.

Now, we claim that \pvc\ is \NPH. The correctness of this claim directly follows from a classic reduction from {\sc 3-SAT} to {\sc Vertex Cover} (see, e.g., \cite{sipser2006}). For the sake of completeness, we present this reduction and argue formally why its output can be viewed correctly as an instance of \pvc\ (rather than an instance of {\sc Vertex Cover}) in Appendix \ref{app:partVC}.

\begin{lemma}\label{lem:partVC}
\pvc\ is \NPH.
\end{lemma}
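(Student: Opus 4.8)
The plan is to give a polynomial-time many-one reduction from \textsc{3-SAT} to \pvc, mimicking the classical \textsc{3-SAT}-to-\textsc{Vertex Cover} reduction but verifying that the resulting instance admits the required partition of $V(G)$ into pairs and triples, and that a vertex cover respecting this partition corresponds exactly to a satisfying assignment. First I would take a \textsc{3-SAT} formula $\varphi$ with variables $x_1,\dots,x_n$ and clauses $C_1,\dots,C_m$, where each clause has exactly three literals (padding shorter clauses by repeating a literal so that every clause is a genuine triangle). For each variable $x_i$ I introduce a \emph{variable gadget}: two vertices, one labeled by the literal $x_i$ and one by $\neg x_i$, joined by an edge; these two vertices form a pair in $\cal P$. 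For each clause $C_j = (\ell_{j,1}\vee \ell_{j,2}\vee \ell_{j,3})$ I introduce a \emph{clause gadget}: three vertices forming a triangle, which I place into $\cal T$. Finally, for each clause $C_j$ and each position $t\in\{1,2,3\}$ I add an edge connecting the $t$-th triangle vertex of $C_j$ to the literal vertex in the variable gadget that matches $\ell_{j,t}$. By construction $\cal P$ consists of pairwise disjoint edges, $\cal T$ consists of pairwise disjoint triangles, and $\cal P\cup\cal T$ partitions $V(G)$ into sets of sizes $2$ and $3$, so the output is a syntactically valid instance of \pvc.

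Next I would argue the forward direction: from a satisfying assignment $\sigma$ of $\varphi$ I build a solution $U$. For each variable I put into $U$ exactly the literal vertex made \emph{true} by $\sigma$ (so $|U\cap P|=1$ for every pair $P$). For each clause triangle I put into $U$ exactly two of its three vertices, chosen so that the \emph{excluded} vertex corresponds to a literal satisfied by $\sigma$ (such a literal exists because $\sigma$ satisfies the clause); this gives $|U\cap T|=2$ for every triple $T$. I then check that $U$ is a vertex cover: each triangle edge is covered because only one triangle vertex is left out; each variable-pair edge is covered because exactly one endpoint is chosen; and each cross edge between a triangle vertex and a literal vertex is covered either by the triangle endpoint (if it is one of the two selected) or, in the single uncovered triangle position, by the literal endpoint, which is selected precisely because that literal is true under $\sigma$. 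Conversely, given any solution $U$, Conditions~\ref{pvc:condition1} and~\ref{pvc:condition2} force exactly one vertex per variable pair and exactly one \emph{uncovered} vertex per clause triangle; I define $\sigma$ by the chosen literal in each variable gadget (Condition~\ref{pvc:condition1} guarantees this is a consistent truth assignment, never selecting both $x_i$ and $\neg x_i$), and the cross edge at the uncovered triangle position, being covered by $U$, forces its literal endpoint into $U$, i.e.\ that literal is true under $\sigma$, so every clause is satisfied.

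The main subtlety I expect is not the logical correctness but ensuring the partition constraints are genuinely compatible with the cover structure, since \pvc\ is strictly more restrictive than plain \textsc{Vertex Cover}: it demands exactly one vertex per pair and exactly two per triple rather than merely a bound on total size. The classical reduction already selects, in its optimal cover, one endpoint per variable edge and two vertices per clause triangle, so the sizes match up automatically; the work is to confirm that these equalities can always be met simultaneously with the covering requirement, which is exactly what the choice of the ``uncovered'' triangle vertex as a satisfied literal accomplishes. Since the reduction is plainly polynomial-time and \textsc{3-SAT} is \NPC, this establishes that \pvc\ is \NPH, proving Lemma~\ref{lem:partVC}. (As the excerpt notes, the full formal write-up is deferred to Appendix~\ref{app:partVC}.)
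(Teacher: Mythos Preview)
Your proposal is correct and follows essentially the same construction as the paper's proof in Appendix~\ref{app:partVC}: variable edges become the pairs in $\mathcal{P}$, clause triangles become the triples in $\mathcal{T}$, and the cross edges enforce that the excluded triangle vertex corresponds to a true literal. The only cosmetic difference is that the paper factors through the classical \textsc{3-SAT}-to-\textsc{Vertex Cover} result (Proposition~\ref{prop:normalVC}) and then argues by a counting observation that a cover of size at most $|X|+2|\mathcal{C}|$ is automatically a \pvc\ solution, whereas you verify the two directions of the \textsc{3-SAT}/\pvc\ equivalence directly; both are fine. One small caution: your parenthetical about ``padding shorter clauses by repeating a literal'' is unnecessary (exact \textsc{3-SAT} is already \NPC) and, if taken literally, could create parallel edges between a triangle vertex and a literal vertex, violating the simple-graph assumption---just start from \textsc{3-SAT} with three distinct literals per clause and drop that remark.
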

\section{Reducing \pvc\ to \popular}\label{sec:reduction}

Let $I=(G,{\cal P},{\cal T})$ be an instance of \pvc. In this section, we construct an instance $\red(I)=(H,L=\{\ell_v: v\in V(H)\})$ of \popular. Note that, to avoid confusion, we denote the graph in $\red(I)$ by $H$ rather than $G$, since the latter already denotes the graph in $I$.
We remark that the Edge Coverage gadget below is in fact the entire reduction from (standard) {\sc Vertex Cover} to an optimization variant of \popular\ recently given by Kavitha \cite{DBLP:journals/corr/abs-1802-07440} (in that context, we will use notation consistent with this work). Our two other gadgets are completely new. After describing the Edge Coverage gadget, we briefly discuss its weakness. In particular, this brief discussion sheds light on the jump in understanding the {\sc Popular Matching} problem that we had to perform in order to employ this known gadget (or any other similar gadget in the literature on popular matchings) to prove the hardness of {\sc Popular Matching}.

\subsection{Edge Coverage}

For every vertex $i\in V(G)$, we add four new vertices (to $H$), denoted by $a_i,b_i,c_i$ and $d_i$. In addition, we add the edges $\{d_i,a_i\},\{a_i,b_i\},\{a_i,c_i\}$ and $\{b_i,c_i\}$ (see Fig.~\ref{fig:edgeCoverage}). Now, for every edge $e=\{i,j\}\in E(G)$, we add two vertices, $u^e_i$ and $u^e_j$, and the edges $\{u^e_i,u^e_j\},\{b_i,u^e_i\}$ and $\{b_j,u^e_j\}$.

Let us now give a partial definition of the preference lists of the vertices added so far (see Fig.~\ref{fig:edgeCoverage}). When we will add neighbors to some of these vertices, they will be appended to the end of these partial lists, and we will not change the values that we are about to define. For every vertex $i\in V(G)$, we have the following definitions.
\begin{itemize}
\item {\bf Vertex $a_i$:} $\ell_{a_i}(b_i)=1$; $\ell_{a_i}(c_i)=2$; $\ell_{a_i}(d_i)=3$.
\item {\bf Vertex $b_i$:} $\ell_{b_i}(a_i)=1$; $\ell_{b_i}$ restricted to $\{u^e_i: e\in E_G(i)\}$ is an arbitrary bijection into $\{2,3,\ldots,|E_G(i)|+1\}$;\footnote{That is, every vertex in $\{u^e_i: e\in E_G(i)\}$ is assigned a unique integer from $\{2,3,\ldots,|E_G(i)|+1\}$, and it is immaterial to us which bijection to choose to achieve this.} $\ell_{b_i}(c_i)=|N_G(i)|+2$.
\item {\bf Vertex $c_i$:} $\ell_{c_i}(a_i)=1$; $\ell_{c_i}(b_i)=2$.
\item {\bf Vertex $d_i$:} $\ell_{d_i}(a_i)=1$.
\item {\bf Vertex $u^e_i$ for any $e=\{i,j\}\in E_G(i)$:} $\ell_{u^e_i}(u^e_j)=1$; $\ell_{u^e_i}(b_i)=2$.
\end{itemize}

This completes the description of the Edge Coverage gadget.

\begin{figure}[t!]\centering
\fbox{\includegraphics[scale=0.8]{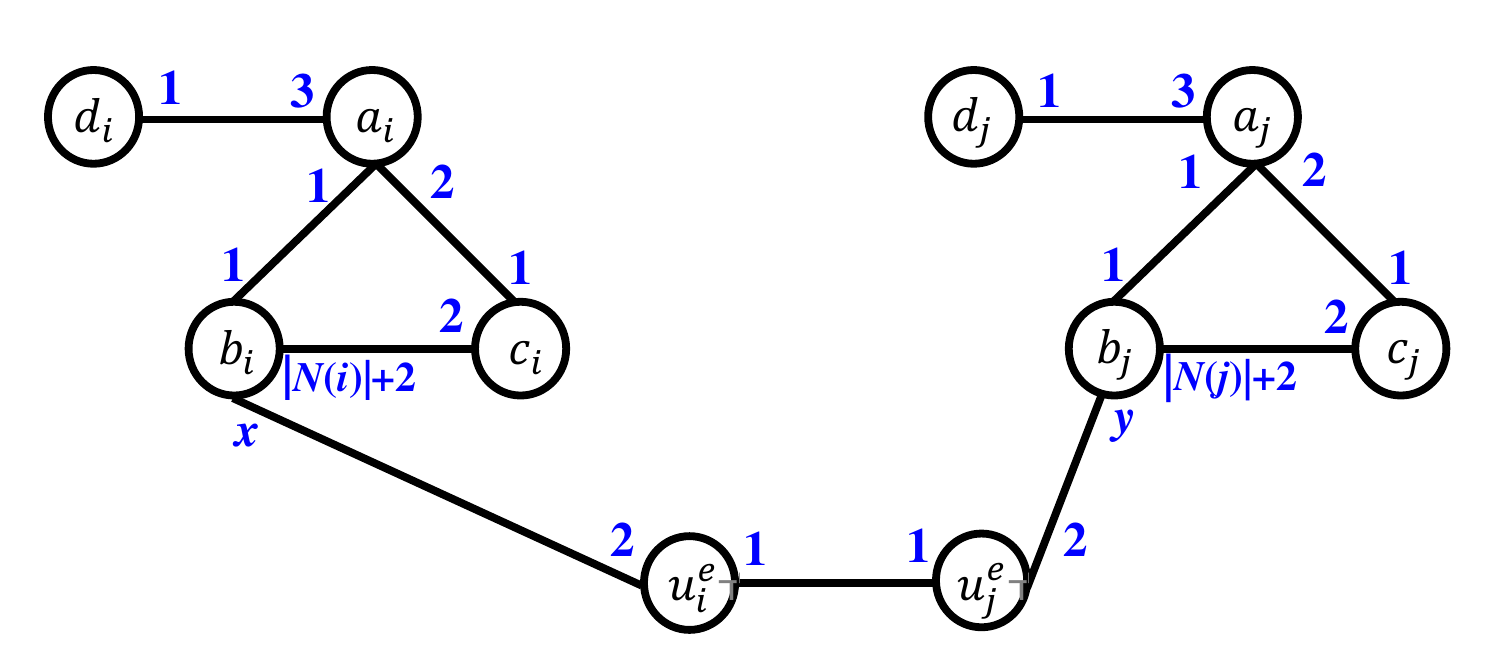}}
\caption{The Edge Coverage gadget. Here, $x\in\{2,3,\ldots,|N_G(i)|+1\}$ and $y\in\{2,3,\ldots,|N_G(j)|+1\}$.}\label{fig:edgeCoverage}
\end{figure}

\paragraph{Intuition.} The Edge Coverage gadget aims to encode (as we will see in Section \ref{sec:correctness}) the selection of a vertex as follows. In every popular matching $M$, either $\{a_i,b_i\}\in M$ or both $\{a_i,d_i\}\in M$ and $\{b_i,c_i\}\in M$. The special choice of the preferences ensure this, where the first choice indicates that $i$ is present in the vertex cover encoded by $M$, while the second choice indicates that $i$ is not present in this vertex cover. Intuitively, $a_i$ and $b_i$ prefer each other the most, but if we choose to match them, we ``leave out'' both $d_i$ and $c_i$, which gives rise to the two configurations as above. Then, the addition of $u^e_i$ and $u^e_j$, which prefer each other the most, and which are inserted in the ``middle'' of $b_i$'s and $b_j$'s lists, respectively, will ensure that that every edge is indeed covered. To establish this last claim, it will also be important that $c_i$ prefers $a_i$ over $b_i$---this will allow us to ``move'' from the configuration of having $\{a_i,d_i\},\{b_i,c_i\}\in M$ and $\{a_j,d_j\},\{b_j,c_j\}\in M$ to one where $c_i$ and $c_j$ are matched to $a_i$ and $a_j$, respectively, when we try to exhibit a matching more popular than $M$.

While this gadget, already given by Kavitha \cite{DBLP:journals/corr/abs-1802-07440}, is very useful to us, its main drawback is that it cannot enforce popular matchings to favor the selection of $\{b_i,c_i\}\in M$ and $\{a_i,d_i\}\in M$ over $\{a_i,b_i\}\in M$. In other words, this gadget does not help us, in any way, to force the encoded vertex cover to be as small as possible. (We remark that Kavitha \cite{DBLP:journals/corr/abs-1802-07440} considers a variant of {\sc Popular Matching} where the matching should be as large as possible, and hence the inherent difficulty of the problem is circumvented.) By considering \pvc\ rather than {\sc Vertex Cover}, we do not need to deal with such ``size optimization'' constraint anymore. However, we now need to handle the constraints imposed by $\cal P$ and $\cal T$. Nevertheless, these two sets are very structured as explained in Section \ref{sec:partVC} (in sharp contrast to, say, an arbitrary instance of {\sc 3-SAT}). In fact, every detail of the gadgets described next is carefully tailored to exploit the extra structural properties of \pvc\ as much as possible, as will be made clear in Section \ref{sec:correctness}.

\subsection{Pair Selector}

For every pair $\{i,j\}\in{\cal P}$ with $i<j$, we add two new vertices (to $H$), denoted by $f_{ij}$ and $f_{ji}$, along with the edges $\{d_i,f_{ij}\},\{f_{ij},c_j\},\{d_j,f_{ji}\}$ and $\{f_{ji},c_i\}$ (see Fig.~\ref{fig:pairSelector}). In addition, we insert the edges $\{c_i,d_j\}$ and $\{c_j,d_i\}$.

We update the preference lists of the vertices as follows (see Fig.~\ref{fig:pairSelector}).
\begin{itemize}
\item {\bf Vertex $c_i$:} $\ell_{c_i}(f_{ji})=3$; $\ell_{c_i}(d_j)=4$.
\item {\bf Vertex $c_j$:} $\ell_{c_j}(f_{ij})=3$; $\ell_{c_j}(d_i)=4$.
\item {\bf Vertex $d_i$:} $\ell_{d_i}(c_j)=2$; $\ell_{d_i}(f_{ij})=3$.
\item {\bf Vertex $d_j$:} $\ell_{d_j}(c_i)=2$; $\ell_{d_j}(f_{ji})=3$.
\item {\bf Vertex $f_{ij}$:} $\ell_{f_{ij}}(d_i)=1$; $\ell_{f_{ij}}(c_j)=2$.
\item {\bf Vertex $f_{ji}$:} $\ell_{f_{ji}}(d_j)=1$; $\ell_{f_{ji}}(c_i)=2$.
\end{itemize}

Note that the definition above is valid since no vertex in $V(G)$ participates in more than one pair, and hence no integer is assigned by any function $\ell_{\diamond}$ more than once. This completes the description of the Pair Selector gadget.

\begin{figure}[t!]\centering
\fbox{\includegraphics[scale=0.8]{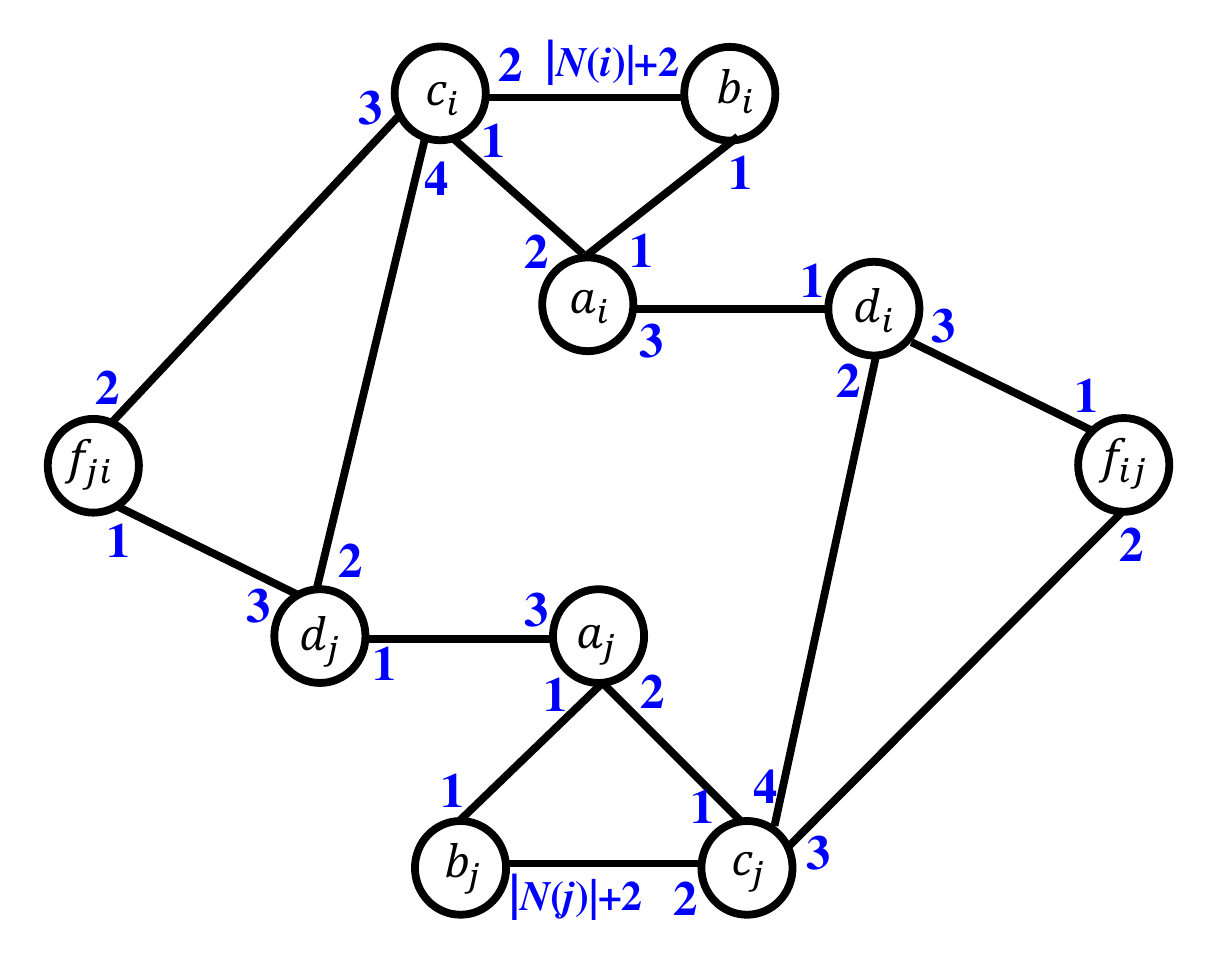}}
\caption{The Pair Selector gadget.}\label{fig:pairSelector}
\end{figure}

\paragraph{Intuition.} First, we would like to point out that the Pair Selector gadget is symmetric in the sense that if we swap $i$ and $j$, we obtain an isomorphic structure also with respect to preferences. Thus, the gadget is well (uniquely) defined even if we drop the requirement ``with $i<j$'' above. We will use this symmetry when we prove the correctness of our reduction.

To gain some deeper understanding of this gadget, let us recall that in \pvc, {\em exactly} one vertex among $\{i,j\}$ must be selected. We already know that the Edge Coverage gadget is meant to ensure that {\em at least} one vertex among $\{i,j\}$ is selected. Hence, we only need to ensure that not {\em both} $i$ and $j$ are selected. However, if both $i$ and $j$ are selected, then both $c_i$ and $d_j$ are left not matched. Then, the preferences on the triangle on $\{c_i,d_j,f_{ji}\}$ are chosen specifically to ``cause trouble''---no matter which edge of this triangle will be picked by the matching, we can replace it by a different edge on this triangle to exhibit a more popular matching. For example, if we pick $\{c_i,f_{ji}\}$, then $d_j$ is left not matched, while $f_{ji}$ prefers $d_j$ over $c_i$. This means that by replacing $\{c_i,f_{ji}\}$ by $\{f_{ji},d_j\}$, we make both $f_{ji}$ and $d_j$ more satisfied, while only $c_i$ becomes less satisfied (no other vertex in $H$ is affected by the swap).

In light of the swap above, it may appear as if it would have been sufficient to keep the triangle on $\{c_i,d_j,f_{ji}\}$, while removing the triangle on $\{c_j,d_i,f_{ij}\}$ from the gadget. However, without the second triangle, the proof of the forward direction fails---the matching attempted to construct from a vertex cover will not be popular. In particular, by having the second triangle as well, we will always by able to match all of the vertices in $H$, and hence avoid the need to consider the second condition in Proposition \ref{prop:char}. Again, we stress that the second triangle is not meant to ease the proof, but that without it the forward direction of the proof fails. It is also worth to note here that only having these two triangles is not sufficient, but the exact ``orientation'' of their preferences is crucial. In particular, if we changed the orientation of only one of the triangles---for example, if we made $c_i$ prefer $d_j$ over $f_{ji}$, $d_j$ prefer $f_{ji}$ over $c_i$, and $f_{ji}$ prefer $c_i$ over $d_j$---then the gadget would have no longer been symmetric, and the proof of the forward direction would have failed. Roughly speaking, the two triangles on $\{c_i,d_j,f_{ji}\}$ and $\{c_j,d_i,f_{ij}\}$  ``work together'' to prevent the existence of alternating cycles that must not exist by Proposition \ref{prop:char}. Deeper coordination is required in the next gadget, and we will elaborate on it more when we explain the intuition behind that gadget.

\subsection{Triple Selector}

For every triple $\{i,j,k\}\in{\cal T}$ with $i<j<k$, we add six new edges (to $H$): $\{d_i,d_j\}$, $\{d_j,d_k\}$, $\{d_k,d_i\}$, $\{c_i,c_j\}$, $\{c_j,c_k\}$ and $\{c_k,c_i\}$ (see Fig.~\ref{fig:tripleSelector}).

We update the preference lists of the vertices as follows (see Fig.~\ref{fig:tripleSelector}).
\begin{itemize}
\item {\bf Vertex $c_i$:} $\ell_{c_i}(c_k)=3$; $\ell_{c_i}(c_j)=4$.
\item {\bf Vertex $c_j$:} $\ell_{c_j}(c_i)=3$; $\ell_{c_j}(c_k)=4$.
\item {\bf Vertex $c_k$:} $\ell_{c_k}(c_j)=3$; $\ell_{c_k}(c_i)=4$.
\item {\bf Vertex $d_i$:} $\ell_{d_i}(d_j)=2$; $\ell_{d_i}(d_k)=3$.
\item {\bf Vertex $d_j$:} $\ell_{d_j}(d_k)=2$; $\ell_{d_j}(d_i)=3$.
\item {\bf Vertex $d_k$:} $\ell_{d_k}(d_i)=2$; $\ell_{d_k}(d_j)=3$.
\end{itemize}

Note that the definition above is valid since no vertex in $V(G)$ participates in both a pair and a triple, or in more than one triple, and hence no integer is assigned by any function $\ell_{\diamond}$ more than once. This completes the description of the Triple Selector gadget.

\begin{figure}[t!]\centering
\fbox{\includegraphics[scale=0.8]{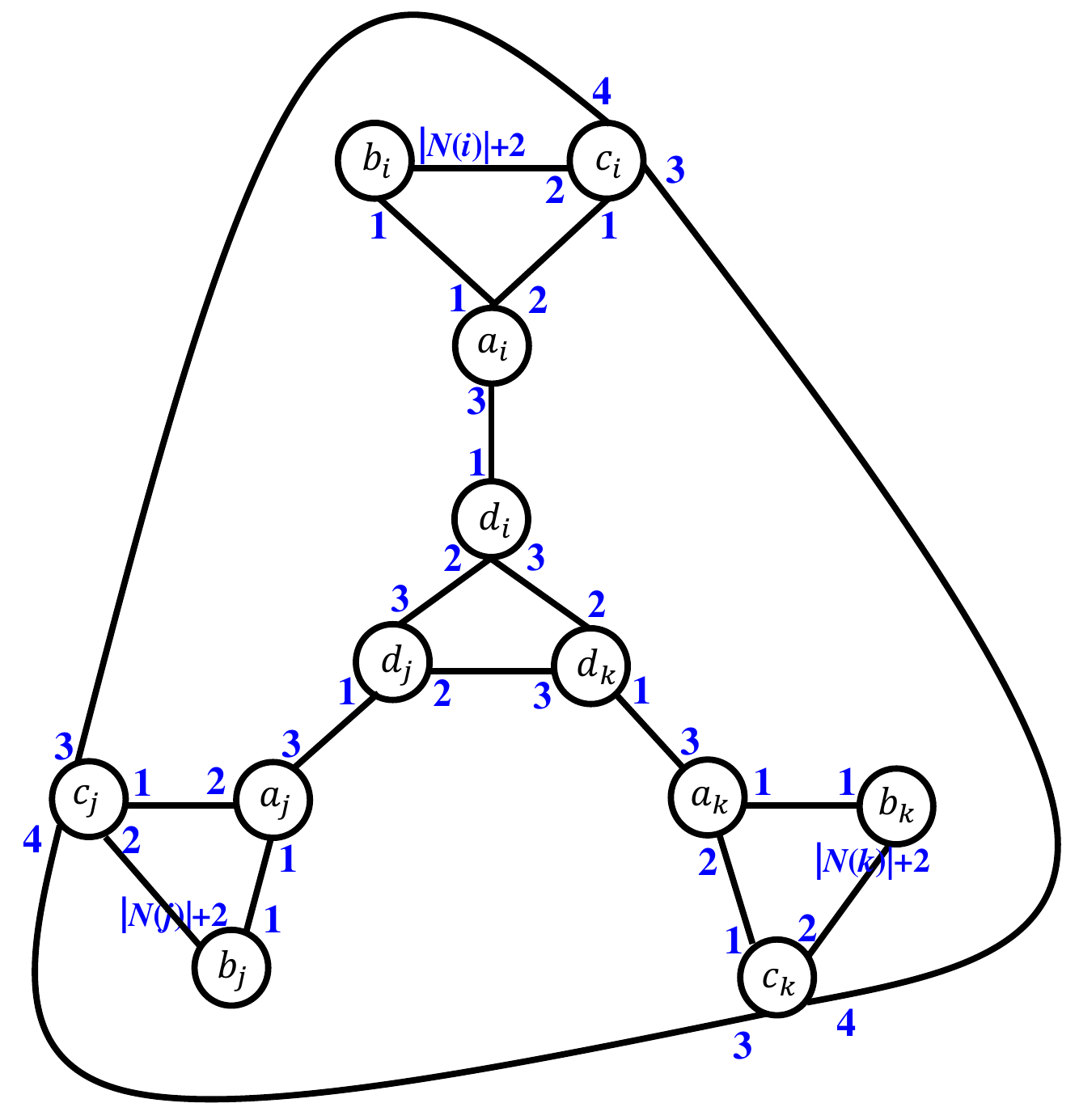}}
\caption{The Triple Selector gadget.}\label{fig:tripleSelector}
\end{figure}

\paragraph{Intuition.} First, we would like to point out that the Triple Selector gadget is symmetric with respect to cyclic shifts. That is, if we replace $j$ by $i$, $k$ by $j$, and $i$ by $k$, then we obtain an isomorphic structure also with respect to preferences. We will use this symmetry when we prove the correctness of our reduction.

To gain some deeper understanding of this gadget, let us recall that in \pvc, {\em exactly} two vertices among $\{i,j,k\}$ must be selected. We already know that the Edge Coverage gadget will ensure that {\em at least} two vertices among $\{i,j,k\}$ are selected (since $\{i,j,k\}$ induces a triangle in $G$ and to cover the edges of a triangle at least two of its vertices must be selected). Hence, we only need to ensure that not {\em all} of the vertices $i,j$ and $k$ are selected. However, if $i,j$ and $k$ are all selected, then $d_i,d_j$ and $d_k$ are all left not matched. Then, the preferences on the triangle on $\{d_i,d_j,d_k\}$ are chosen specifically to ``cause trouble'' in a manner similar to the Pair Selector gadget---again, no matter which edge of this triangle will be picked by the matching, we can replace it by a different edge on this triangle to exhibit a more popular matching. For example, if we pick $\{d_i,d_j\}$, then $d_k$ is left not matched, while $d_j$ prefers $d_k$ over $d_i$. This means that by replacing $\{d_i,d_j\}$ by $\{d_j,d_k\}$, we make both $d_j$ and $d_k$ more satisfied, while only $d_i$ becomes less satisfied (no other vertex in $H$ is affected by the swap).

As in the case of the Pair Selector gadget, the inner triangle (in Fig.~\ref{fig:tripleSelector}) on $\{d_i,d_j,d_k\}$ is not sufficient---the forward direction of the proof fails without the outer triangle on $\{c_i,c_j,c_k\}$. Here, to make the forward direction go through, an additional idea is required. Roughly speaking, we need to have coordination between the triangles (recall that in the previous gadget, some coordination was also noted as a requirement to ensure symmetry, but here deeper coordination is required). Let us elaborate (in an non-formal manner) on the meaning of this coordination here. Specifically, we ``orient'' the inner triangle and the outer triangle in different directions. (Note that symmetry would have been achieved even if we would have oriented them in the same direction.) By this, we mean that while in the inner triangle, $d_i$ prefer $d_j$ over $d_k$, $d_j$ prefers $d_k$ over $d_i$, and $d_k$ prefers $d_i$ over $d_j$, the same does not hold when we rename $d$ to be $c$---here, the direction is reversed, as $c_i$ prefers $c_k$ over $c_j$, $c_j$ prefers $c_i$ over $c_k$, and $c_k$ prefers $c_j$ over $c_i$. This reversal will come in handy when we prove the forward direction, as it will ``block up'' alternating cycles that must not exist by Proposition \ref{prop:char}. Intuitively, the main insight is that if we try to improve the matching we will construct in the proof of the forward direction in a ``clockwise direction'', then we can make two $d$-type vertices more satisfied and only one $d$-type vertex less satisfied, but at the same time, more $c$-type vertices become unsatisfied, and hence we overall do not gain more popularity. In addition, if we try to improve the matching in a ``counter-clockwise direction'', then we can make two $c$-type vertices more satisfied and only one $c$-type vertex less satisfied, but at the same time, more $d$-type vertices become unsatisfied, and hence again we overall do not gain more popularity.

\section{Correctness}\label{sec:correctness}

In this section, we prove the correctness of our reduction. For the sake of clarity, the proof is divided into two subsections, corresponding to the forward and reverse directions. Together with Lemma \ref{lem:partVC}, this proof will conclude the correctness of Theorem \ref{thm:main}.

\subsection{Forward Direction}\label{sec:forward}

Here, we prove that if there exists a solution to the instance $(G,{\cal P},{\cal T})$ of \pvc, then there exists a popular matching in $\red(I)=(H,L=\{\ell_v: v\in V(H)\})$. For this purpose, let us suppose that $U$ is a solution to $(G,{\cal P},{\cal T})$. In what follows, we first construct a matching $M$ in $H$. Then, we will show that the graph $H_M$ (see Definition \ref{def:GM}) satisfies several useful properties, which will eventually lead us to the conclusion that $M$ is popular.

\begin{figure}[t!]\centering
\fbox{\includegraphics[scale=0.8]{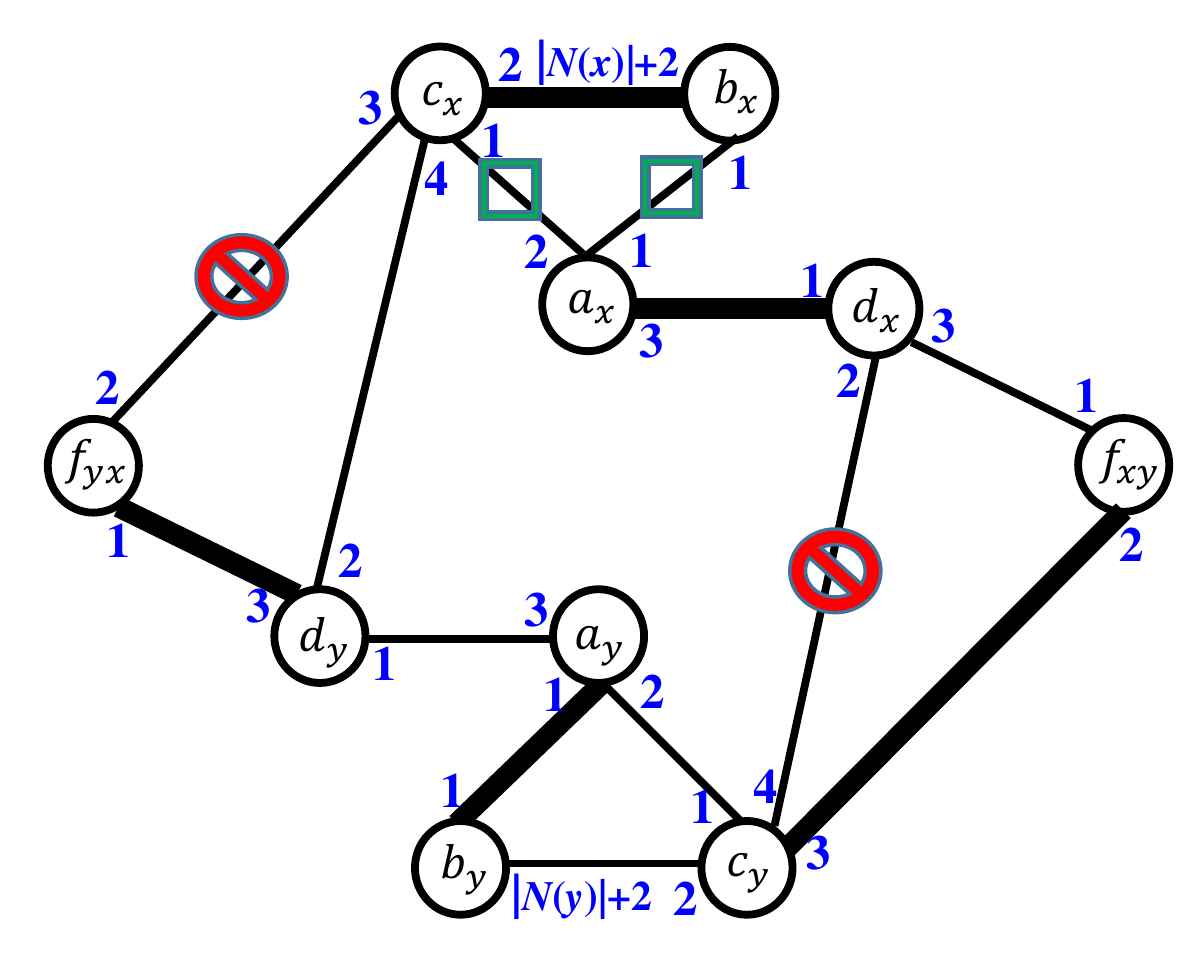}}
\caption{Edges shown in bold are inserted into $M$ (in Section \ref{sec:forward}). Edges labeled -2 by $\lab_M$ are marked by a red no entry sign, and edges labeled +2 by $\lab_M$ are marked by a green square.}\label{fig:pairSelectorMatch}
\end{figure}

\paragraph{Construction of $M$.} The matching $M$ is the union of the following sets.
\begin{itemize}
\item $M_U=\{\{u^e_i,u^e_j\}: \{i,j\}\in E(G)\}$.
\item For every $\{i,j\}\in{\cal P}$ with $i<j$, let $x\in\{i,j\}$ be the vertex not in $U$, and $y\in\{i,j\}$ be the vertex in $U$, and insert the edges $\{a_x,d_x\}$, $\{b_x,c_x\}$, $\{a_y,b_y\}$, $\{f_{xy},c_y\}$ and $\{f_{yx},d_y\}$ into $M_{\cal P}$. (See Fig.~\ref{fig:pairSelectorMatch}.)
\item For every $\{i,j,k\}\in{\cal T}$ with $i<j<k$, let $x\in\{i,j,k\}$ be the vertex not in $U$, and $y,z\in\{i,j,k\}$ be the two vertices in $U$ such that $d_x$ prefers $d_y$ over $d_z$, and insert the edges $\{a_x,d_x\}$, $\{b_x,c_x\}$, $\{a_y,b_y\}$, $\{a_z,b_z\}$, $\{c_y,c_z\}$ and $\{d_y,d_z\}$ into $M_{\cal T}$. (See Fig.~\ref{fig:tripleSelectorMatch}.)
\end{itemize}
Since the sets in ${\cal P}\cup{\cal T}$ are pairwise disjoint, the sets above are well (uniquely) defined. We also remark that the figures do not only capture the case where $i=x$ due to the symmetry of our gadgets (i.e., if $j=x$ or $k=x$ in the case of a triple, we obtain precisely the same figures).

\begin{figure}[t!]\centering
\fbox{\includegraphics[scale=0.8]{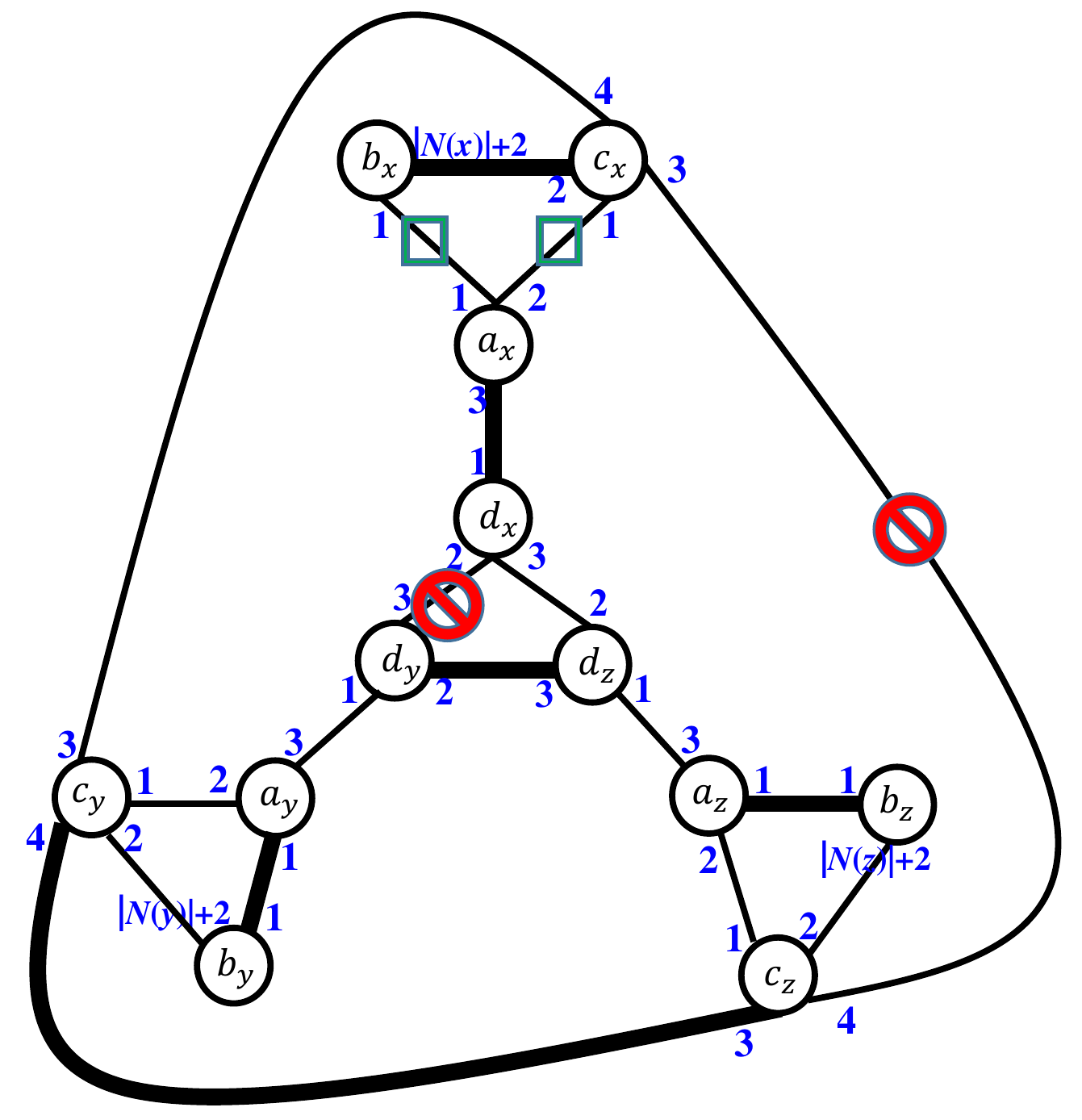}}
\caption{Edges shown in bold are inserted into $M$ (in Section \ref{sec:forward}). Edges labeled -2 by $\lab_M$ are marked by a red no entry sign, and edges labeled +2 by $\lab_M$ are marked by a green square.}\label{fig:tripleSelectorMatch}
\end{figure}

\paragraph{Properties of $H_M$.} Let us start by observing that, since all vertices in $H$ are matched by $M$, the following statement immediately holds.
\begin{observation}\label{obs:noUnmatched}
There is no alternating path in $H_M$ that starts from a vertex not matched by $M$ and contains at least one edge labeled +2 by $\lab_M$.
\end{observation}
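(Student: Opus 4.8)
The statement in question is Observation~\ref{obs:noUnmatched}: there is no alternating path in $H_M$ starting from an unmatched vertex that contains an edge labeled $+2$.

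\medskip

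The plan is extremely short, because the observation is an immediate consequence of a single structural fact about the matching $M$ constructed in this section. First I would verify that $M$ is a \emph{perfect} matching on $V(H)$, i.e., that every vertex of $H$ is matched by $M$. The reason is that the three components of $M$ together cover every vertex: the edges in $M_U$ match all of the $u^e_i$ vertices (each edge $e=\{i,j\}\in E(G)$ contributes the edge $\{u^e_i,u^e_j\}$, and these are the only vertices of the form $u^e_\star$); the edges in $M_{\cal P}$ match, for each pair $\{i,j\}\in{\cal P}$, all ten vertices $a_i,b_i,c_i,d_i,a_j,b_j,c_j,d_j,f_{ij},f_{ji}$; and the edges in $M_{\cal T}$ match, for each triple $\{i,j,k\}\in{\cal T}$, all twelve vertices $a_\star,b_\star,c_\star,d_\star$ for $\star\in\{i,j,k\}$. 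Since ${\cal P}\cup{\cal T}$ partitions $V(G)$, every gadget vertex is accounted for exactly once, and there is no $f$-type or $u$-type vertex left over.

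\medskip

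Given that $M$ saturates $V(H)$, the conclusion follows with no further work. By the definition of an alternating path in $G_M$ (stated just before Proposition~\ref{prop:char}), if the first or last edge of the path lies outside $M$, then the corresponding endpoint must be \emph{unmatched} by $M$. Since $M$ leaves no vertex unmatched, no alternating path can have an endpoint that is unmatched; in particular, there is no alternating path in $H_M$ that \emph{starts from a vertex not matched by $M$} at all, let alone one that additionally contains a $+2$-labeled edge. Thus the existentially quantified object in the observation simply does not exist, and the statement holds vacuously.

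\medskip

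I do not anticipate any genuine obstacle here: the only thing to be careful about is the bookkeeping in the first step, namely confirming that the union $M_U\cup M_{\cal P}\cup M_{\cal T}$ really is a matching (the listed edges are pairwise disjoint because the gadgets attached to distinct pairs/triples share no vertices, and within a single gadget the chosen edges are disjoint by inspection of the construction) and that it covers \emph{all} of $V(H)$. Once that perfectness is in hand, the observation is purely definitional. This is precisely the payoff the authors flagged in the Pair Selector and Triple Selector intuition: including the ``outer'' triangles and the $f$-vertices guarantees that every vertex can be matched, which lets us discharge the second condition of Proposition~\ref{prop:char} for free.
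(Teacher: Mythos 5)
Your proposal is correct and matches the paper's reasoning exactly: the paper justifies Observation~\ref{obs:noUnmatched} with the single remark that all vertices of $H$ are matched by $M$, so the statement holds vacuously. Your more careful vertex-by-vertex accounting of why $M_U\cup M_{\cal P}\cup M_{\cal T}$ saturates $V(H)$ is just a fleshed-out version of the same argument.
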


We proceed to identify which edges in $H_M$ are labeled +2 by $\lab_M$.

\begin{lemma}\label{lem:2}
The set of edges labeled +2 by $\lab_M$ is $\{\{a_i,b_i\}: i\notin U\}\cup\{\{a_i,c_i\}: i\notin U\}$.
\end{lemma}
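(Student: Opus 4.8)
The plan is to apply the definition of $\lab_M$ directly. An edge $\{u,v\}\notin M$ is labeled $+2$ precisely when \emph{both} endpoints strictly prefer one another to their $M$-partner, i.e.\ $\ell_u(M(u))>\ell_u(v)$ and $\ell_v(M(v))>\ell_v(u)$. Hence proving the lemma splits into two tasks: (i) verify that each edge in the two named families is $+2$ and lies outside $M$; and (ii) verify that every other edge of $H\setminus M$ is \emph{not} $+2$. For (ii) it suffices to exhibit, for each such edge, a single endpoint that weakly prefers its own $M$-partner, which already falsifies the conjunction. The organizing principle is that an endpoint matched to its rank-$1$ neighbor --- call such a vertex \emph{happy} --- prefers its partner to every alternative and therefore can never lie on a $+2$ edge; so any edge touching a happy vertex is dispatched immediately, and only edges whose \emph{both} endpoints are unhappy require a hands-on check.

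First I would tabulate $M(v)$ for every vertex type, split by whether the underlying $i\in V(G)$ is in $U$. The happy vertices are: every $u^e_i$ (matched to $u^e_j$, rank $1$); $a_i$ and $b_i$ for $i\in U$ (matched to each other, both rank $1$); $d_i$ for $i\notin U$ (matched to $a_i$, rank $1$); and, inside each pair, whichever of $f_{ij},f_{ji}$ is matched to its $d$-neighbor. Crucially, for $i\notin U$ the vertex $a_i$ is \emph{unhappy}, being matched to its worst neighbor $d_i$ (rank $3$), while $b_i$ is matched to its worst neighbor $c_i$ and $c_i$ is matched to $b_i$ at rank $2$; these three vertices are the source of the only $+2$ edges. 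A short computation using $\ell_{a_i}(b_i)=1,\ \ell_{a_i}(c_i)=2,\ \ell_{a_i}(d_i)=3,\ \ell_{b_i}(a_i)=1$ and $\ell_{c_i}(a_i)=1$ confirms that $\{a_i,b_i\}$ and $\{a_i,c_i\}$ are $+2$ for every $i\notin U$, settling task (i).

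For task (ii) I would sweep through the edges gadget by gadget, using the stated symmetries (swap $i,j$ in a pair, cyclically shift a triple) to cut down the casework. Within the Edge Coverage piece on $\{a_i,b_i,c_i,d_i\}$ with $i\notin U$, two edges ($\{a_i,d_i\},\{b_i,c_i\}$) lie in $M$ and the other two are the named $+2$ edges, so nothing new arises; the external edges $\{b_i,u^e_i\}$ and $\{c_i,\cdot\}$ are killed respectively by the happy vertex $u^e_i$ and by $c_i$ preferring its rank-$2$ partner to its rank-$3$/rank-$4$ selector neighbors. For $i\in U$ every Edge Coverage edge touches the happy vertex $a_i$ or $b_i$. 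In the Triple Selector, both non-matching edges of the inner triangle contain the happy vertex $d_x$ ($x\notin U$, matched to $a_x$), and both non-matching edges of the outer triangle contain $c_x$, matched to $b_x$ at rank $2$, which it prefers to its rank-$3$/rank-$4$ triangle neighbors; so no $+2$ edge survives there.

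The one place where the ``happy endpoint'' shortcut does not finish the job, and which I expect to be the main obstacle, is the cross edge of the Pair Selector joining the $c$-vertex of the out-of-$U$ endpoint to the $d$-vertex of the in-$U$ endpoint. There the in-$U$ vertex $d$, matched to an $f$-vertex at rank $3$, genuinely prefers the edge (its $c$-neighbor sits at rank $2$), so one is forced to inspect the other endpoint: that $c$-vertex is out of $U$, hence matched to its $b$-vertex at rank $2$, whereas the $d$-vertex sits at rank $4$ in its list, so it prefers its partner and the edge is not $+2$. Verifying this requires correctly reading off the matched-ranks of $c_y,c_z,d_y,d_z$ and of the $f$-vertices from the preference tables, which is the bulk of the bookkeeping. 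I would remark in passing that, although the opposite orientations of the inner and outer triangles of the Triple Selector are essential for the alternating-cycle conditions of Proposition~\ref{prop:char} exploited in later lemmas, they play \emph{no} role here: the rank-$1$ match of $d_x$ and the rank-$2$ match of $c_x$ alone suppress every candidate $+2$ edge inside the selectors, irrespective of orientation.
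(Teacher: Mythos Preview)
Your proposal is correct and follows essentially the same approach as the paper: both arguments first verify that $\{a_i,b_i\}$ and $\{a_i,c_i\}$ are $+2$ for $i\notin U$, then dispose of every other non-matching edge by exhibiting an endpoint that (weakly) prefers its $M$-partner---your ``happy vertex'' abstraction is exactly the paper's observation that $a_i,b_i$ (for $i\in U$) and $u^e_i$ are matched to their top choice, and your rank-$2$ argument for $c_x$ and rank-$1$ argument for $d_x$ ($x\notin U$) are precisely what the paper invokes for the selector edges. One cosmetic point: your paragraph-3 clause ``$\{c_i,\cdot\}$ \ldots\ killed by $c_i$ preferring its rank-$2$ partner'' already covers the Pair Selector edge $\{c_x,d_y\}$, so paragraph~4 is redundant rather than a genuinely separate obstacle; and the remaining Pair Selector non-matching edges $\{d_x,f_{xy}\}$, $\{d_x,c_y\}$, $\{c_x,f_{yx}\}$ are dispatched by your already-stated happiness of $d_x$ and $f_{yx}$, even though you do not list them explicitly.
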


\begin{proof}
First, for all $i\notin U$, we have that $\{a_i,d_i\}\in M$ and $\{b_i,c_i\}\in M$. Since $a_i$ prefers both $b_i$ and $c_i$ over $d_i$, and both $b_i$ and $c_i$ prefer $a_i$ over each other, we have that all the edges in $\{\{a_i,b_i\}: i\notin U\}\cup\{\{a_i,c_i\}: i\notin U\}$ are labeled +2 by $\lab_M$. Next, we show that all other edges in $H$ are not labeled +2 by $\lab_M$, which will complete the proof.

Observe that for all $i\in U$, we have that $\{a_i,b_i\}\in M$, and since $p_{a_i}(b_i)=p_{b_i}(a_i)=1$, this means the no edge incident to $a_i$ or $b_i$ can be labeled +2 by $\lab_M$. Similarly, for all $\{i,j\}\in E(G)$, we have that $\{u^e_i,u^e_j\}\in M$, and since $p_{u^e_i}(u^e_j)=p_{u^e_j}(u^e_i)=1$, this means the no edge incident to $u^e_i$ or $u^e_j$ can be labeled +2 by $\lab_M$. Thus, no edge that belongs to an Edge Coverage gadget, excluding the edges in $\{\{a_i,b_i\}: i\notin U\}\cup\{\{a_i,c_i\}: i\notin U\}$, is labeled +2 by $\lab_M$.

Now, consider some pair $\{i,j\}\in{\cal P}$ with $i<j$, and let $x\in\{i,j\}$ be the vertex not in $U$, and $y\in\{i,j\}$ be the vertex in $U$. Then, the edges $\{a_x,d_x\}$, $\{b_x,c_x\}$, $\{a_y,b_y\}$, $\{f_{xy},c_y\}$ and $\{f_{yx},d_y\}$ belong to $M$. However, $c_x$ prefers $b_x$ over both $f_{yx}$ and $d_y$, and $d_x$ prefers $a_x$ over both $f_{xy}$ and $c_y$, which means that none of the edges $\{c_x,f_{yx}\}$, $\{c_x,d_y\}$, $\{d_x,f_{xy}\}$ and $\{d_x,c_y\}$ is labeled +2 by $\lab_M$.

Finally, consider some triple $\{i,j,k\}\in{\cal T}$ with $i<j<k$, and let $x\in\{i,j,k\}$ be the vertex not in $U$, and $y,z\in\{i,j,k\}$ be the two vertices in $U$ such that $d_x$ prefers $d_y$ over $d_z$. Then, the edges $\{a_x,d_x\}$, $\{b_x,c_x\}$, $\{a_y,b_y\}$, $\{a_z,b_z\}$, $\{c_y,c_z\}$ and $\{d_y,d_z\}$ belong to $M$. However, $c_x$ prefers $b_x$ over both $c_y$ and $c_z$, and $d_x$ prefers $a_x$ over both $d_y$ and $d_z$, which means that none of the edges $\{c_x,c_y\}$, $\{c_x,c_z\}$, $\{d_x,d_y\}$ and $\{d_x,d_z\}$ is labeled +2 by $\lab_M$.
\end{proof}

Now, Lemma \ref{lem:2} directly implies the correctness of the following lemma.
\begin{lemma}\label{lem:plusInGadget}
For any $P\in {\cal P}$, the only edges labeled +2 by $\lab_M$ in the Pair Selector gadget associated with $P$ are $\{a_x,b_x\}$ and $\{a_x,c_x\}$ for the unique vertex $x\in P$ that is not in $U$. Similarly, for any $T\in {\cal T}$, the only edges labeled +2 by $\lab_M$ in the Triple Selector gadget associated with $T$ are $\{a_x,b_x\}$ and $\{a_x,c_x\}$ for the unique vertex $x\in T$ that is not in $U$.
\end{lemma}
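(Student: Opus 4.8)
The plan is to derive this statement as an immediate corollary of Lemma~\ref{lem:2}, which already determines the \emph{entire} set of edges of $H$ labeled $+2$ by $\lab_M$, namely $S := \{\{a_i,b_i\}: i\notin U\}\cup\{\{a_i,c_i\}: i\notin U\}$. Given this, proving Lemma~\ref{lem:plusInGadget} would amount to no more than intersecting the global set $S$ with the edge set of each individual gadget; no fresh analysis of labels is required.

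For a pair $P=\{i,j\}\in{\cal P}$ I would first use that $U$ is a solution, so Condition~\ref{pvc:condition1} gives $|U\cap P|=1$ and hence a \emph{unique} vertex $x\in P\setminus U$ together with a unique $y\in P\cap U$. The only members of $S$ whose index lies in $\{i,j\}$ are those contributed by the index not in $U$, i.e.\ by $x$, namely $\{a_x,b_x\}$ and $\{a_x,c_x\}$; the index $y\in U$ contributes nothing to $S$. Since every edge of the Pair Selector gadget associated with $P$ has both endpoints carrying an index in $\{i,j\}$, intersecting $S$ with this gadget leaves exactly $\{a_x,b_x\}$ and $\{a_x,c_x\}$, as claimed.

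The case of a triple $T=\{i,j,k\}\in{\cal T}$ is handled in the same way. Here Condition~\ref{pvc:condition2} gives $|U\cap T|=2$, so there is again a unique $x\in T\setminus U$, while the other two indices $y,z$ both lie in $U$ and contribute nothing to $S$. Thus the only members of $S$ with index in $\{i,j,k\}$ are $\{a_x,b_x\}$ and $\{a_x,c_x\}$, and intersecting $S$ with the edges of the Triple Selector gadget associated with $T$ returns precisely these two edges.

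The only point requiring care --- and the closest thing to an obstacle in an otherwise routine deduction --- is fixing the bookkeeping convention for which edges count as \emph{belonging} to a gadget. The two surviving $+2$ edges $\{a_x,b_x\}$ and $\{a_x,c_x\}$ were introduced in the Edge Coverage construction, so one must read the gadget associated with $P$ (resp.\ $T$) as including the Edge Coverage substructures attached to the vertices of $P$ (resp.\ $T$). Once that convention is pinned down, every edge genuinely internal to a Pair or Triple Selector --- such as $\{d_i,f_{ij}\}$, $\{f_{ij},c_j\}$, $\{c_i,c_j\}$ or $\{d_i,d_j\}$ --- automatically fails to lie in $S$, simply because it is not of the form $\{a_i,b_i\}$ or $\{a_i,c_i\}$; hence Lemma~\ref{lem:2} already rules these out and no extra verification is needed.
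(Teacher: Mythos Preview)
Your proposal is correct and takes essentially the same approach as the paper: the paper's entire proof of this lemma is the single sentence ``Now, Lemma~\ref{lem:2} directly implies the correctness of the following lemma,'' and your argument is precisely the spelling-out of that implication, using that $U$ is a solution (hence $|U\cap P|=1$ and $|U\cap T|=2$) to identify the unique $x$ not in $U$. Your remark about the bookkeeping convention (that the gadget should be read as including the attached Edge Coverage substructures) is accurate and matches how the paper subsequently uses this lemma.
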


Having Lemma \ref{lem:plusInGadget} at hand, we are ready to rule out the possibly of having a ``bad'' alternating path that is completely contained inside a Pair Selector gadget or a Triple Selector gadget.

\begin{lemma}\label{lem:noPath}
For any $P\in {\cal P}$, there is no alternating path in $H_M$ that contains at least two edges labeled +2 by $\lab_M$ and which consists only of edges from the Pair Selector gadget associated with $P$.
Similarly, for any $T\in {\cal T}$, there is no alternating path in $H_M$ that contains at least two edges labeled +2 by $\lab_M$ and which consists only of edges from the Triple Selector gadget associated with $T$.
\end{lemma}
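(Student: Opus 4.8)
The plan is to lean on Lemma~\ref{lem:plusInGadget}, which identifies \emph{exactly} the two $+2$-edges inside each gadget, and then to exploit the fact that these two edges share a vertex. First I would invoke Lemma~\ref{lem:plusInGadget}: for a pair $P\in{\cal P}$, letting $x\in P$ be the unique vertex not in $U$, the only edges labeled $+2$ by $\lab_M$ inside the Pair Selector gadget associated with $P$ are $\{a_x,b_x\}$ and $\{a_x,c_x\}$, and the identical statement holds for a triple $T\in{\cal T}$ with its unique vertex $x\in T$ not in $U$. Consequently, any alternating path confined to the gadget that contains \emph{at least two} $+2$-edges would have to contain \emph{both} of these edges, since no third $+2$-edge is available.

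The crux is then a short incidence/parity observation. Both $\{a_x,b_x\}$ and $\{a_x,c_x\}$ are incident to the common vertex $a_x$, and both lie outside $M$ (the labeling $\lab_M$ is defined only on $E(H)\setminus M$, so every labeled edge is a non-matching edge). I would argue that no alternating path can traverse both of them: if it did, then $a_x$ would carry two distinct path-edges and hence be an internal vertex of degree $2$ on the path, with both of its incident path-edges lying outside $M$. But by the definition of an alternating path, the two edges meeting at any internal vertex alternate between $M$ and $E(H)\setminus M$, so no internal vertex can be incident to two non-matching path-edges. This contradiction shows that $\{a_x,b_x\}$ and $\{a_x,c_x\}$ cannot both appear on a single alternating path, and therefore such a path can contain at most one $+2$-edge---fewer than the two required. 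The Triple Selector case follows verbatim, since Lemma~\ref{lem:plusInGadget} furnishes precisely the same pair $\{a_x,b_x\},\{a_x,c_x\}$ of $+2$-edges there.

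I do not expect a substantive obstacle; once Lemma~\ref{lem:plusInGadget} is in hand the argument is purely structural. The only point needing a little care is the boundary case in which $a_x$ is an \emph{endpoint} of the path rather than an internal vertex: an endpoint is incident to a single path-edge and so cannot account for both $+2$-edges either, a case already subsumed by the degree argument above. Thus the whole lemma reduces to the single observation that the two relevant $+2$-edges share the vertex $a_x$ and both lie outside $M$.
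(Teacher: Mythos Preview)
Your proposal is correct and takes essentially the same approach as the paper: invoke Lemma~\ref{lem:plusInGadget} to pin down $\{a_x,b_x\}$ and $\{a_x,c_x\}$ as the only $+2$-edges, then observe that they share the vertex $a_x$ and are both non-matching, so no alternating path can contain both. The paper phrases the obstruction as ``these two edges are part of a triangle in $H$,'' whereas you unpack the alternation/parity reason explicitly; your version is if anything more transparent.
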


\begin{proof}
First, consider some pair $P\in{\cal P}$. By Lemma \ref{lem:plusInGadget}, the only edges labeled +2 by $M$ in the Pair Selector gadget associated with $P$ are $\{a_x,b_x\}$ and $\{a_x,c_x\}$ for the unique vertex $x\in P$ that is not in $U$. However, these two edges are part of a triangle in $H$, and therefore no alternating path can contain both of them together.

Second, consider some triple $T\in{\cal T}$. By Lemma \ref{lem:plusInGadget}, the only edges labeled +2 by $M$ in the Triple Selector gadget associated with $T$ are $\{a_x,b_x\}$ and $\{a_x,c_x\}$ for the unique vertex $x\in T$ that is not in $U$. However, these two edges are again part of a triangle in $H$, and therefore no alternating path can contain both of them together.
\end{proof}

In the following two lemmas, we also rule out the possibly of having a ``bad'' alternating cycle that is completely contained inside a Pair Selector gadget or a Triple Selector gadget.

\begin{figure}[t!]\centering
\fbox{\includegraphics[scale=0.8]{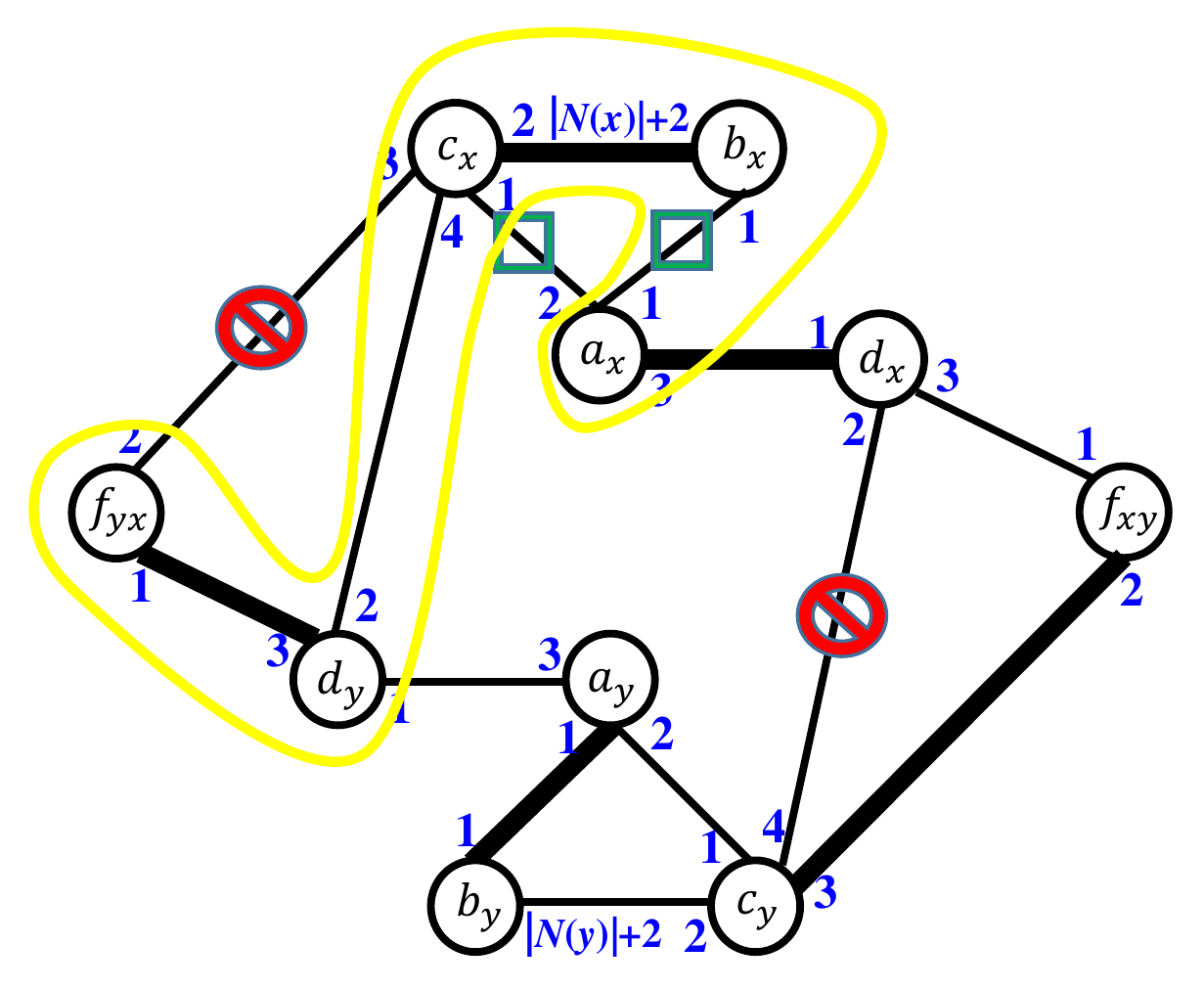}}
\caption{The path constructed in the proof of Lemma \ref{lem:noCyclePair}, highlighted in yellow.}\label{fig:pairSelectorPath}
\end{figure}

\begin{lemma}\label{lem:noCyclePair}
For any $P\in {\cal P}$, there is no alternating cycle in $H_M$ that contains at least one edge labeled +2 by $\lab_M$ and which consists only of edges from the Pair Selector gadget associated with $P$.
\end{lemma}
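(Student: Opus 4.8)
The plan is to argue by contradiction, exploiting the fact that two specific edges of the gadget receive the label $-2$ and therefore disappear from $H_M$, creating a dead-end that no alternating cycle can get past. Fix a pair $P=\{i,j\}\in\mathcal{P}$, and let $x\in P$ be the vertex not in $U$ and $y\in P$ the vertex in $U$; by the symmetry of the Pair Selector gadget noted earlier, it is harmless to reason directly in terms of $x$ and $y$. By Lemma \ref{lem:plusInGadget}, the only edges of this gadget labeled $+2$ by $\lab_M$ are $\{a_x,b_x\}$ and $\{a_x,c_x\}$, and both are incident to $a_x$. Consequently, any alternating cycle that contains a $+2$ edge must pass through $a_x$, and hence, being alternating, must also contain the matching edge $\{a_x,d_x\}$ incident to $a_x$.

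First I would record the labels of the remaining non-matching edges of the gadget, a routine application of Definition \ref{def:label} to the preference values listed in the Pair Selector description. The crucial outcome of this computation is that the edges $\{c_x,f_{yx}\}$ and $\{c_y,d_x\}$ both receive the label $-2$: for each of them, both endpoints are matched by $M$ to a strictly more preferred vertex. By Definition \ref{def:GM}, neither edge appears in $H_M$. The key structural consequence is that, within the gadget, the vertex $f_{yx}$ is now incident to a single edge of $H_M$, namely its matching edge $\{d_y,f_{yx}\}$; that is, $f_{yx}$ is a leaf. Since every vertex of an alternating cycle is incident to exactly two of its edges, $f_{yx}$ cannot lie on any alternating cycle.

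It then remains to show that the cycle cannot close. Suppose towards a contradiction that an alternating cycle $C$ containing a $+2$ edge exists and uses only gadget edges. As observed, $\{a_x,d_x\}\in C$, so I trace $C$ outward from $d_x$. With $\{c_y,d_x\}$ deleted, the edge $\{d_x,f_{xy}\}$ is the unique non-matching edge of $H_M$ at $d_x$, so $C$ is forced to continue $d_x,f_{xy}$ and then, via the matching edge $\{f_{xy},c_y\}$, to $c_y$. At $c_y$ the only non-matching choices remaining in $H_M$ are $\{a_y,c_y\}$ and $\{b_y,c_y\}$. Following either choice, the subsequent steps are again forced through the matching edge $\{a_y,b_y\}$, and in every case the traversal either revisits $c_y$ (impossible on a cycle) or is driven to $d_y$ and thence to the leaf $f_{yx}$, where it cannot be extended; this forced path is exactly the one highlighted in Figure \ref{fig:pairSelectorPath}. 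Hence $C$ cannot close, a contradiction. The only genuinely delicate point, and the main obstacle, is the label computation that kills $\{c_x,f_{yx}\}$ and $\{c_y,d_x\}$: it is precisely these two $-2$ labels, arising from the chosen orientation of the triangle preferences, that sever the gadget and guarantee the dead-end, after which the remainder is a mechanical traversal.
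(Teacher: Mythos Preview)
Your argument is correct, but it traces the cycle in the opposite direction from the paper and therefore needs an extra $-2$ edge and an extra case split. The paper first disposes of $\{a_x,c_x\}$ by observing it forces a triangle on $a_x,b_x,c_x$, and then, assuming $\{a_x,b_x\}\in E(C)$, walks through $b_x$ to $c_x$; since $\{c_x,f_{yx}\}$ is labeled $-2$, the walk is forced to $d_y$ and then to the leaf $f_{yx}$ after only four edges. You instead note that $\{a_x,d_x\}$ must lie on $C$ regardless of which $+2$ edge is used, and walk outward through $d_x$; this requires computing the additional $-2$ label on $\{c_y,d_x\}$ to force the step to $f_{xy}$, and then a two-case analysis at $c_y$ before reaching the same leaf $f_{yx}$. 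Your route has the mild conceptual advantage of not separating the two $+2$ edges, at the cost of a longer traversal. One small inaccuracy: the path highlighted in Figure~\ref{fig:pairSelectorPath} is the paper's four-edge path $f_{yx}\!-\!d_y\!-\!c_x\!-\!b_x\!-\!a_x$, not the longer walk through $d_x,f_{xy},c_y,\ldots$ that you describe, so your reference to that figure does not match your argument.
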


\begin{proof}
Consider some pair $P=\{i,j\}\in{\cal P}$ with $i<j$, and let $x\in\{i,j\}$ be the vertex not in $U$, and $y\in\{i,j\}$ be the vertex in $U$. Suppose, by way of contradiction, that there exists an alternating cycle $C$ in $H_M$ that contains at least one edge labeled +2 by $\lab_M$ and which consists only of edges from the Pair Selector gadget associated with $P$.

First, suppose that $\{a_x,c_x\}\in E(C)$. Then, since $\{c_x,b_x\}\in M$, we have that $\{c_x,b_x\}\in E(C)$. Since the only neighbor in the gadget of $b_x$ apart from $c_x$ is $a_x$, we have that $\{b_x,a_x\}\in E(C)$. However, we have thus ``closed'' a triangle, which contradicts the choice of $C$ as an alternating cycle.

By Lemma \ref{lem:plusInGadget} and since $C$ contains at least one edge labeled +2 by $\lab_M$, it must hold that $\{a_x,b_x\}\in E(C)$. Then, since $\{c_x,b_x\}\in M$, we have that $\{c_x,b_x\}-\{b_x,a_x\}$ is a subpath of $C$. Now, note that $c_x$ prefers $b_x$ over its two other neighbors in the gadget, and $f_{yx}$ prefers $d_y$ over $c_x$. Therefore, $\{c_x,f_{yx}\}$ is labeled -2 by $\lab_M$, and hence it does not exist in $H_M$. Thus, we also have that $\{d_y,c_x\}\in E(C)$, and since $\{d_y,f_{yx}\}\in M$, we have that $\{f_{yx},d_y\}-\{d_y,c_x\}-\{c_x,b_x\}-\{b_x,a_x\}$ is a subpath of $C$ (see Fig.~\ref{fig:pairSelectorPath}). However, $f_{yx}$ has no neighbor in $H_M$ apart from $d_y$, and therefore we have reached a contradiction to the choice of $C$ as an alternating cycle.
\end{proof}

\begin{lemma}\label{lem:noCycleTriple}
For any $T\in {\cal T}$, there is no alternating cycle in $H_M$ that contains at least one edge labeled +2 by $\lab_M$ and which consists only of edges from the Triple Selector gadget associated with $T$.
\end{lemma}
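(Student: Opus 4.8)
The plan is to mirror the proof of Lemma~\ref{lem:noCyclePair}, tracing the unique alternating walk forced by a $+2$ edge until it runs into a dead-end. Fix $T=\{i,j,k\}\in{\cal T}$ with $i<j<k$, and let $x\notin U$ and $y,z\in U$ be as in the construction of $M$, with $d_x$ preferring $d_y$ over $d_z$. Since the Triple Selector is invariant under cyclic shifts, I would first use this symmetry to assume without loss of generality that $x=i$, $y=j$, $z=k$ (this is consistent, as $d_i$ then prefers $d_j$ over $d_k$). By Lemma~\ref{lem:plusInGadget}, the only edges of the gadget labeled $+2$ by $\lab_M$ are $\{a_x,b_x\}$ and $\{a_x,c_x\}$, so a hypothetical bad alternating cycle $C$ must contain at least one of them.

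The conceptual core, which I would establish first by a direct label computation, is that the two triangles are oriented \emph{oppositely}, and this forces $\{d_x,d_y\}$ and $\{c_x,c_z\}$ to be labeled $-2$ (hence absent from $H_M$), while $\{d_x,d_z\}$ and $\{c_x,c_y\}$ are labeled $0$ (hence present). This is exactly where the gadget design is used: if $\{c_x,c_z\}$ were present in $H_M$, then $c_x-b_x-a_x-d_x-d_z-d_y-a_y-b_y-c_y-c_z-c_x$ would be a genuine bad alternating cycle through the $+2$ edge $\{a_x,b_x\}$, so the $-2$ label on $\{c_x,c_z\}$ is precisely what breaks the traversal. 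I expect verifying these labels (and that they are the \emph{only} relevant absences inside the gadget) to be the main obstacle, since the whole argument hinges on the asymmetry between the inner and outer triangles.

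With this in hand, I would run the case analysis. If $\{a_x,c_x\}\in E(C)$, then alternation forces the matched edge $\{b_x,c_x\}$ into $C$, and since the only neighbor of $b_x$ in $H_M$ inside the gadget apart from $c_x$ is $a_x$, it forces $\{a_x,b_x\}$ as well; this closes the triangle on $\{a_x,b_x,c_x\}$ and contradicts alternation, exactly as in Lemma~\ref{lem:noCyclePair}. Otherwise $\{a_x,b_x\}\in E(C)$ while $\{a_x,c_x\}\notin E(C)$, and I would trace the forced walk out of this $+2$ edge: alternation pins down the matched edges $\{b_x,c_x\}$ and $\{a_x,d_x\}$, and from $c_x$ and $d_x$ the unique admissible non-matching continuations are $\{c_x,c_y\}$ and $\{d_x,d_z\}$ respectively (because $\{c_x,c_z\}$ and $\{d_x,d_y\}$ are absent from $H_M$ and $a_x$ is already saturated). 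Continuing forces the matched edges $\{c_y,c_z\}$ and $\{d_z,d_y\}$; in particular $c_y$ now carries its two cycle edges $\{c_x,c_y\}$ and $\{c_y,c_z\}$ and is saturated.

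Finally I would chase the walk along the $d$-side to the contradiction. From $d_y$ the only admissible non-matching edge is $\{d_y,a_y\}$ (as $\{d_y,d_x\}$ is absent and $\{d_y,d_z\}$ is already used), which forces the matched edge $\{a_y,b_y\}$; but then $b_y$ must leave along a non-matching edge, and its only such neighbor inside the gadget is $c_y$, which is already saturated. Thus the walk dead-ends at $b_y$ and cannot close into a cycle, the desired contradiction. The routine part is the bookkeeping that every step of this walk is \emph{uniquely} determined (no branching) up to the dead-end, which is where the $-2$ labels recorded above are repeatedly invoked.
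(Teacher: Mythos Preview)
Your proposal is correct and follows essentially the same route as the paper's proof: both first dispose of the case $\{a_x,c_x\}\in E(C)$ via the triangle on $\{a_x,b_x,c_x\}$, and then, starting from the forced $+2$ edge $\{a_x,b_x\}$, trace the unique alternating walk $c_z\!-\!c_y\!-\!c_x\!-\!b_x\!-\!a_x\!-\!d_x\!-\!d_z\!-\!d_y\!-\!a_y\!-\!b_y$ (using that $\{c_x,c_z\}$ and $\{d_x,d_y\}$ are labeled $-2$) until the contradiction at $c_y$. The only cosmetic differences are that you fix $x=i,y=j,z=k$ by the cyclic symmetry (the paper works with general $x,y,z$), and you phrase the final contradiction as a dead-end at $b_y$ rather than as $c_y$ acquiring three neighbors on the path; these are equivalent.
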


\begin{proof}
Consider some triple $T=\{i,j,k\}\in{\cal T}$ with $i<j<k$, and let $x\in\{i,j,k\}$ be the vertex not in $U$, and $y,z\in\{i,j,k\}$ be the two vertices in $U$ such that $d_x$ prefers $d_y$ over $d_z$. Suppose, by way of contradiction, that there exists an alternating cycle $C$ in $H_M$ that contains at least one edge labeled +2 by $\lab_M$ and which consists only of edges from the Triple Selector gadget associated with $T$.

First, suppose that $\{a_x,c_x\}\in E(C)$. Then, since $\{c_x,b_x\}\in M$, we have that $\{c_x,b_x\}\in E(C)$. Since the only neighbor in the gadget of $b_x$ apart from $c_x$ is $a_x$, we have that $\{b_x,a_x\}\in E(C)$. However, we have thus ``closed'' a triangle, which contradicts the choice of $C$ as an alternating cycle.

By Lemma \ref{lem:plusInGadget} and since $C$ contains at least one edge labeled +2 by $\lab_M$, it must hold that $\{a_x,b_x\}\in E(C)$. Then, since $\{c_x,b_x\}\in M$ and $\{a_x,d_x\}\in M$, we have that $\{c_x,b_x\}-\{b_x,a_x\}-\{a_x,d_x\}$ is a subpath of $C$. Observe that $c_x$ prefers $b_x$ over $c_z$, and $c_z$ prefers $c_y$ over $c_x$. Moreover, $d_x$ prefers $a_x$ over $d_y$, and $d_y$ prefers $d_z$ over $d_x$. Therefore, both $\{c_x,c_z\}$ and $\{d_x,d_y\}$ are labeled -2 by $\lab_M$, which means that these two edges do not exist in $H_M$. Since the only neighbor of $c_x$ in the gadget except for $a_x,b_x$ and $c_z$ is $c_y$, and since the only neighbor of $d_x$ in the gadget except for $a_x$ and $d_y$ is $d_z$, we have that $\{c_y,c_x\},\{d_x,d_z\}\in E(C)$. Since $\{c_z,c_y\},\{d_z,d_y\}\in M$, this means that $\{c_z,c_y\}-\{c_y,c_x\}-\{c_x,b_x\}-\{b_x,a_x\}-\{a_x,d_x\}-\{d_x,d_z\}-\{d_z,d_y\}$ is a subpath of $C$. Now, since the only neighbor of $d_y$ in the gadget except for $d_x$ and $d_z$ is $a_y$, we have that $\{d_y,a_y\}\in E(C)$. Because $\{a_y,b_y\}\in M$, and since the only neighbor of $b_y$ in this gadget except for $a_y$ is $c_y$, this means that $\{c_z,c_y\}-\{c_y,c_x\}-\{c_x,b_x\}-\{b_x,a_x\}-\{a_x,d_x\}-\{d_x,d_z\}-\{d_z,d_y\}-\{d_y,a_y\}-\{a_y,b_y\}-\{b_y,c_y\}$ is a subpath of $C$ (see Fig.~\ref{fig:tripleSelectorPath}). However, $c_y$ has three different neighbors on this path, which contradicts the choice of $C$ as an alternating cycle.
\end{proof}

\begin{figure}[t!]\centering
\fbox{\includegraphics[scale=0.8]{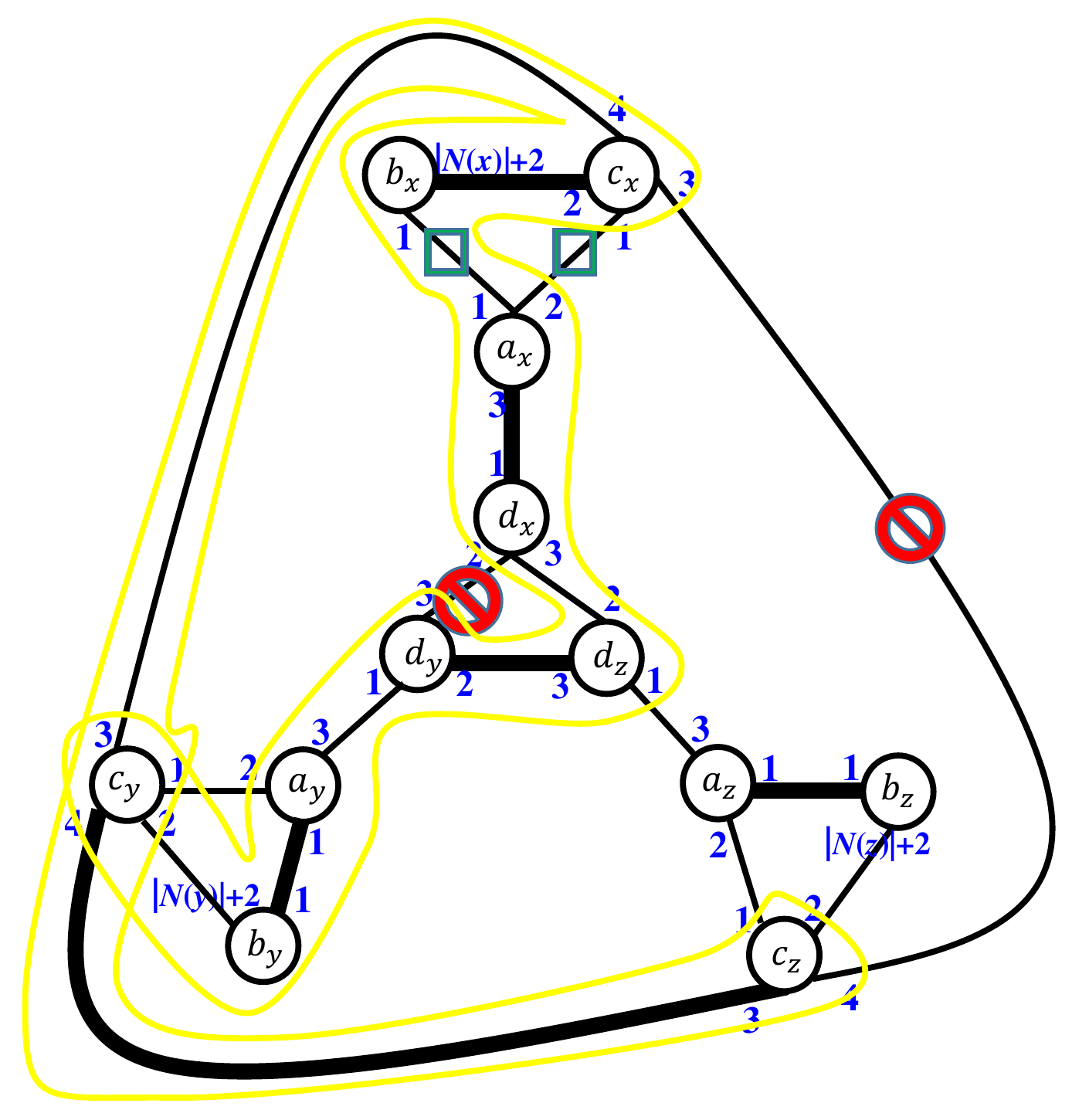}}
\caption{The path constructed in the proof of Lemma \ref{lem:noCycleTriple}, highlighted in yellow.}\label{fig:tripleSelectorPath}
\end{figure}

Next, in the following two lemmas, we argue that a shortest ``bad'' alternating path, as well as a ``bad'' alternating cycle, cannot contain any vertex of the form $u^e_i$. The proof of the second lemma is essentially a simplified version of the proof of the first one, but we present the full details for the sake of clarity.

\begin{lemma}\label{lem:noCrossPath}
Let $S$ be a shortest alternating path in $H_M$ that contains at least two edges labeled +2 by $\lab_M$. Then, $S$ does not contain the vertex $u^e_i$ for any $e\in E(G)$ and $i\in e$.
\end{lemma}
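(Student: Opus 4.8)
The plan is to analyze the structure of a shortest alternating path $S$ that contains at least two $+2$-labeled edges, and to show that if such a path were to pass through any vertex $u^e_i$, we could derive a contradiction. The key observation to exploit is that the gadget vertices $u^e_i$ have very limited connectivity in $H$: each $u^e_i$ (for $e=\{i,j\}$) is adjacent only to $u^e_j$ and to $b_i$, and we know from the construction that $\{u^e_i,u^e_j\}\in M$. Moreover, by the analysis in Lemma \ref{lem:2}, no edge incident to any $u^e_i$ or $u^e_j$ is labeled $+2$, so none of the two relevant $+2$-edges of $S$ can be incident to such a vertex.

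First I would record the local picture around any $u^e_i$ that $S$ might visit. Since $\{u^e_i,u^e_j\}\in M$ and $S$ is alternating, if $S$ contains $u^e_i$ as an internal vertex then the matching edge $\{u^e_i,u^e_j\}$ must be on $S$, forcing $u^e_j$ onto $S$ as well; the only way to continue past this matched edge on each side is through the non-matching edges $\{b_i,u^e_i\}$ and $\{b_j,u^e_j\}$, which are present in $H_M$ iff they are not labeled $-2$. I would check the labels of $\{b_i,u^e_i\}$ under $\lab_M$ using the preference data: recall $u^e_i$ prefers $u^e_j$ (rank $1$) over $b_i$ (rank $2$), and since $\{u^e_i,u^e_j\}\in M$, the endpoint $u^e_i$ does \emph{not} prefer $b_i$ over its partner, so $\{b_i,u^e_i\}$ is never $+2$ and, depending on $b_i$'s status in $M$, is either $0$ or $-2$. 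The crucial consequence is that traversing through a $u^e_i$–$u^e_j$ matched edge forces the path to enter and exit via two \emph{distinct} Edge Coverage gadgets (those of $i$ and of $j$).

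The main argument is a \emph{replacement/shortcut} step: I would show that any minimal path using such a crossing edge can be shortened, contradicting minimality of $S$. Concretely, if $S$ passes through the subpath $\{b_i,u^e_i\}-\{u^e_i,u^e_j\}-\{u^e_j,b_j\}$, then both $\{b_i,u^e_i\}$ and $\{b_j,u^e_j\}$ are labeled $0$ (not $-2$), which pins down the status of $b_i$ and $b_j$ in $M$: each of them must be matched to its $a$-vertex, i.e.\ $i,j\in U$. But then, examining where both $+2$-edges of $S$ can lie, I would argue that one can reroute $S$ within a single gadget or truncate it so that it still contains two $+2$-edges while being strictly shorter, or else that the two endpoints of this crossing cannot both be continued to reach $+2$-edges without revisiting a vertex. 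The cleanest form is to show that a $u$-crossing never helps reach a second $+2$-edge, because once we commit to $i,j\in U$, Lemma \ref{lem:2} leaves no $+2$-edge reachable on the far side, so a path reaching two $+2$-edges must remain on one side of the crossing, allowing its truncation.

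The hard part will be the careful case analysis tying the $0$/$-2$ label of the two ``bridge'' edges $\{b_i,u^e_i\}$, $\{b_j,u^e_j\}$ to the forced matching status of $b_i,b_j$, and then leveraging Lemma \ref{lem:2} to show that on at least one side no further $+2$-edge can be reached—this is what lets us excise the $u$-crossing and contradict the assumption that $S$ is a \emph{shortest} path with two $+2$-edges. I would expect to need the structural facts established in Lemmas \ref{lem:2} and \ref{lem:plusInGadget} about exactly which edges carry the $+2$ label, together with the fact that $u^e_i$ has no neighbor in $H_M$ other than $u^e_j$ and possibly $b_i$, to close off every branch of the case analysis.
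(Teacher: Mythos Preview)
Your label computation for the bridge edge $\{b_i,u^e_i\}$ is reversed, and this breaks the argument. If $i\in U$ then $M(b_i)=a_i$, which $b_i$ ranks first; since $u^e_i$ also prefers its match $u^e_j$ over $b_i$, both endpoints are happy and $\lab_M(\{b_i,u^e_i\})=-2$. Conversely, if $i\notin U$ then $M(b_i)=c_i$, which $b_i$ ranks last, so $b_i$ prefers $u^e_i$ and the label is $0$. Thus ``both bridge edges have label $0$'' forces $i,j\notin U$, not $i,j\in U$ as you wrote. Your downstream plan (``once $i,j\in U$, no $+2$-edge is reachable on the far side'') therefore rests on the wrong hypothesis and does not go through.

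What is actually missing is the vertex-cover property of $U$, which you never invoke. Since $\{i,j\}\in E(G)$ and $U$ is a vertex cover, at least one of $i,j$ lies in $U$; say $i\in U$. By the corrected computation above, $\{b_i,u^e_i\}$ is then labeled $-2$ and is absent from $H_M$, so $u^e_i$ has degree one in $H_M$ and must be an \emph{endpoint} of $S$, with $\{u^e_i,u^e_j\}\in M$ the edge incident to it. Because $S$ carries two $+2$-edges it is not a single edge, so $\{u^e_j,b_j\}\in E(S)$ as well, and neither $\{u^e_i,u^e_j\}$ nor $\{u^e_j,b_j\}$ is labeled $+2$. Deleting these two edges yields a strictly shorter alternating path that still contains both $+2$-edges, contradicting minimality. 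Your proposal gestures at a truncation of this sort at the very end, but the case analysis you set up to reach it (handling only the ``$u^e_i$ internal'' case, and with the labels inverted) does not get there.
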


\begin{proof}
Suppose, by way of contradiction, that $S$ contains the vertex $u^e_i$ for some $e\in E(G)$ and $i\in e$. Then, since $\{u^e_i,u^e_j\}\in M$ for the vertex $j\in V(G)$ such that $e=\{i,j\}$, we have that $\{u^e_i,u^e_j\}\in E(S)$. Since $U$ is a vertex cover in $G$, at least one vertex in $\{i,j\}$ belongs to $U$, and let us suppose without loss of generality that this vertex is $i$. Then, $\{a_i,b_i\}\in M$, which means that $\lab_M(\{b_i,u^e_i\})=-2$. Thus, $\{b_i,u^e_i\}\notin E(H_M)$. Since the only neighbors of $u^e_i$ in $H$ are $b_i$ and $u^e_j$,  the only neighbors of $u^e_j$ in $H$ are $b_j$ and $u^e_i$, this implies that $u^i_e$ is an endpoint of $S$ and $\{u^e_j,b_j\}\in E(S)$. Since $u^e_j$ prefers $u^e_i$ over $b_j$, we have that $\lab_M(\{b_j,u^e_j\})\neq +2$. Thus, by removing $\{u^e_i,u^e_j\}$ and $\{b_j,u^e_j\}$ from $S$, we obtain yet another alternating path in $H_M$ that contains at least two edges labeled +2 by $\lab_M$. This contradicts the choice of $S$ as the shortest alternating path in $H_M$ with this property.
\end{proof}

\begin{lemma}\label{lem:noCrossCycle}
There is no alternating cycle in $H_M$ that contains the vertex $u^e_i$ for any $e\in E(G)$ and $i\in e$.
\end{lemma}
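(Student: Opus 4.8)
The plan is to follow the proof of Lemma~\ref{lem:noCrossPath}, but to exploit the structural rigidity of cycles: since every vertex of an alternating cycle must be incident both to its matching edge and to one non-matching edge of the cycle, any vertex having degree one in $H_M$ cannot lie on a cycle at all. This observation removes the need for the shortest-path / edge-removal argument used in the path case, so the proof becomes strictly simpler.

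First I would assume, for contradiction, that some alternating cycle $C$ in $H_M$ contains $u^e_i$ for some $e=\{i,j\}\in E(G)$ and $i\in e$. Because $\{u^e_i,u^e_j\}\in M$ is the unique edge of $M$ incident to $u^e_i$, it must belong to $C$; consequently $C$ also passes through $u^e_j$, so both $u$-vertices of $e$ lie on $C$.

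Next I would use that $U$ is a vertex cover of $G$ to fix an endpoint of $e$ that lies in $U$; call it $w\in\{i,j\}$ (after possibly exchanging the roles of $i$ and $j$), and let $w'$ denote the other endpoint of $e$. Then $\{a_w,b_w\}\in M$, and since $\ell_{b_w}(a_w)=1<\ell_{b_w}(u^e_w)$ while $\ell_{u^e_w}(u^e_{w'})=1<\ell_{u^e_w}(b_w)=2$, both endpoints of $\{b_w,u^e_w\}$ prefer their $M$-partners to each other. Hence $\lab_M(\{b_w,u^e_w\})=-2$, so $\{b_w,u^e_w\}\notin E(H_M)$.

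Finally, the only neighbours of $u^e_w$ in $H$ are $b_w$ and $u^e_{w'}$; with $\{b_w,u^e_w\}$ absent from $H_M$, the unique edge of $H_M$ incident to $u^e_w$ is the matching edge $\{u^e_w,u^e_{w'}\}$. Thus $u^e_w$ has degree one in $H_M$ and cannot lie on any cycle, contradicting $u^e_w\in V(C)$. The only point requiring care is the choice of $w$: if the vertex named $i$ in the statement happens not to lie in $U$, the $-2$ label need not appear at $u^e_i$ itself, but since $C$ also contains $u^e_j$ and the vertex cover guarantees that at least one of $i,j$ lies in $U$, applying the argument to whichever endpoint is in $U$ yields the contradiction. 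I do not expect any genuine obstacle here, as the whole content is a direct simplification of Lemma~\ref{lem:noCrossPath}.
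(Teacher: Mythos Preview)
Your proof is correct and follows essentially the same approach as the paper: both argue that one endpoint of $e$ lies in $U$, deduce that the corresponding edge $\{b_w,u^e_w\}$ is labeled $-2$ and hence absent from $H_M$, and conclude that $u^e_w$ has only one incident edge in $H_M$ and so cannot lie on a cycle. Your treatment of the ``without loss of generality'' step (noting explicitly that $u^e_j$ also lies on $C$ before switching to the endpoint in $U$) is slightly more careful than the paper's, but the argument is the same.
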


\begin{proof}
Suppose, by way of contradiction, that there exists an alternating cycle $C$ in $H_M$ that contains the vertex $u^e_i$ for some $e\in E(G)$ and $i\in e$. Then, since $\{u^e_i,u^e_j\}\in M$ for the vertex $j\in V(G)$ such that $e=\{i,j\}$, we have that $\{u^e_i,u^e_j\}\in E(C)$. Since $U$ is a vertex cover in $G$, at least one vertex in $\{i,j\}$ belongs to $U$, and let us suppose without loss of generality that this vertex is $i$. Then, $\{a_i,b_i\}\in M$, which means that $\lab_M(\{b_i,u^e_i\})=-2$. Thus, $\{b_i,u^e_i\}\notin E(H_M)$. Since the only neighbors of $u^e_i$ in $H$ are $b_i$ and $u^e_j$, this implies that $C$ cannot be a cycle, and hence we have reached a contradiction.
\end{proof}

\paragraph{Conclusion of the forward direction.} First, by Observation \ref{obs:noUnmatched}, there is no alternating path in $H_M$ that starts from a vertex not matched by $M$ and contains at least one edge labeled +2 by $\lab_M$. Now, by Lemmas \ref{lem:noCrossPath} and \ref{lem:noCrossCycle}, if there exists an alternating cycle in $H_M$ that contains at least one edge labeled +2 by $\lab_M$, or an alternating path in $H_M$ that contains at least two edges labeled +2 by $\lab_M$, then there also exists such a cycle or path that does not contain any vertex in $\{u^e_i: e\in E(G), i\in e\}$. However, if we remove the vertices in $\{u^e_i: e\in E(G), i\in e\}$ from $H$, then the remaining connected components are precisely the Pair Selector and Triple Selector gadgets. By Lemmas \ref{lem:noPath} and \ref{lem:noCyclePair}, there exists no alternating cycle in $H_M$ that contains at least one edge labeled +2 by $\lab_M$, as well as no  alternating path in $H_M$ that contains at least two edges labeled +2 by $\lab_M$, which consists only of edges of a Pair Selector gadget. Moreover, by Lemmas \ref{lem:noPath} and \ref{lem:noCycleTriple}, the same claim holds also with respect to a Triple Selector gadget. Thus, by Proposition \ref{prop:char}, we conclude that $M$ is popular.

\subsection{Reverse Direction}

Here, we prove that if there exists a popular matching in $\red(I)=(H,L=\{\ell_v: v\in V(H)\})$, then there exists a solution to the instance $(G,{\cal P},{\cal T})$ of \pvc. For this purpose, let us suppose that $M$ is a popular matching in $(H,L=\{\ell_v: v\in V(H)\})$. In what follows, we first construct a subset $U\subseteq V(G)$. Then, we will show that $U$ is a vertex cover of $G$. Afterwards, we will show that for every $P\in {\cal P}$, it holds that $|U\cap P|=1$. Lastly, we will show that for every $T\in {\cal T}$, it holds that $|U\cap T|=2$, which will conclude the proof. 

Before implementing this plan, let us give a folklore observation (that is true for any graph and preference lists) that will be used in all proofs ahead.

\begin{observation}\label{obs:maximal}
Let $J$ be an instance of \popular. Every popular matching in $J$ is a maximal matching.
\end{observation}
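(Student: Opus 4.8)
The statement to prove is Observation~\ref{obs:maximal}: every popular matching is maximal. Let me sketch a proof plan.

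The plan is to prove the contrapositive in spirit, or rather to derive a contradiction directly: suppose $M$ is a popular matching in an instance $J=(G,L)$ that is \emph{not} maximal. By the definition of maximality given in the preliminaries, non-maximality means there exists an edge $\{u,v\}\in E(G)$ such that neither $u$ nor $v$ is matched by $M$. The key observation is that both endpoints of this edge are currently unmatched, and under the preference conventions of the paper (where being unmatched is the least preferred status, encoded via $\ell_w(w)=|N_G(w)|+1$ and $M(w)=w$), both $u$ and $v$ would \emph{strictly} prefer to be matched to each other over being unmatched.

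The main step is to exhibit a matching $M'$ that is strictly more popular than $M$, contradicting Definition~\ref{def:popularity}. I would simply take $M' = M \cup \{\{u,v\}\}$. Since $u$ and $v$ are both unmatched by $M$ and $\{u,v\}$ shares no vertex with any edge of $M$, this $M'$ is indeed a valid matching. Now I compare the two matchings vertex by vertex. For every vertex $w \notin \{u,v\}$, its status is identical in $M$ and $M'$ (it is matched to the same partner, or unmatched in both), so such a vertex prefers neither matching over the other. For $u$ and $v$: each is unmatched in $M$ but matched in $M'$, so by condition~(i) of Definition~\ref{def:preference}, both $u$ and $v$ prefer $M'$ over $M$. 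Therefore $\vote(M',M) = 2$ while $\vote(M,M') = 0$, giving $\vote(M',M) - \vote(M,M') = 2 > 0$, which directly violates the condition in Definition~\ref{def:popularity} that $M$ be popular.

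There is essentially no hard part here—the argument is elementary and local, since modifying $M$ by adding one edge changes the status of only the two newly matched vertices. The only subtlety, if any, is to confirm that $M'$ remains a matching (which is immediate because $u,v$ are unmatched, so no edge of $M$ is incident to them) and to track the bookkeeping of $\vote$ correctly using the paper's convention that unmatched counts as the worst status. I would present the proof concisely in two or three sentences along exactly these lines.
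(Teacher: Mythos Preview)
Your proposal is correct and follows essentially the same argument as the paper: assume a popular matching $M$ is not maximal, pick an edge $\{u,v\}$ with both endpoints unmatched, and observe that $M\cup\{\{u,v\}\}$ is strictly more popular, a contradiction. The paper's proof is even terser than yours, but the idea and structure are identical.
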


\begin{proof}
Suppose, by way of contradiction, that there exists a popular matching $\widehat{M}$ in $J$ that is not maximal. Then, there exists an edge $\{x,y\}$ that is present in the graph in $J$, and with both endpoints not matched by $\widehat{M}$. However, by adding $\{x,y\}$ to $\widehat{M}$, we obtain a matching more popular than $\widehat{M}$, and thus reach a contradiction.
\end{proof}

\paragraph{Construction of $U$.} We simply define $U:=\{i\in V(G): \{a_i,b_i\}\in M\}$.

\paragraph{Proof that $U$ is a vertex cover.} The proof the $U$ is a vertex cover is the same as a proof given by Kavitha \cite{DBLP:journals/corr/abs-1802-07440}. However, for the sake of completeness, and also to verify that although our construction has other components, that same proof still goes through, we will present the details in Appendix \ref{app:vc}. Here, we state the claim we need in the following lemma.

\begin{lemma}\label{lem:vc}
The set $U$ is a vertex cover of $G$.
\end{lemma}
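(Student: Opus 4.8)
The plan is to argue by contradiction using Proposition~\ref{prop:char}. Assuming some edge $e=\{i,j\}\in E(G)$ is uncovered, i.e.\ $\{a_i,b_i\}\notin M$ and $\{a_j,b_j\}\notin M$, I would exhibit either a matching more popular than $M$ or an alternating structure in $H_M$ forbidden by Proposition~\ref{prop:char}. First I would establish two preliminary facts valid for every vertex $t\in V(G)$: that $a_t$ is matched by $M$, and that if $\{a_t,b_t\}\notin M$ then $b_t$ is matched by $M$. For the first, if $a_t$ were unmatched then maximality (Observation~\ref{obs:maximal}) forces $d_t$ to be matched, the edge $\{a_t,d_t\}$ is labeled $+2$ (both endpoints strictly prefer it to their current status), and $a_t-d_t-M(d_t)$ is then an alternating path from an unmatched vertex carrying a $+2$ edge, contradicting Proposition~\ref{prop:char}. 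For the second, once $a_t$ is matched the edge $\{a_t,b_t\}$ is labeled $+2$ (both rank each other first), so if $b_t$ were unmatched the path $b_t-a_t-M(a_t)$ would again be a forbidden alternating path from an unmatched vertex. In particular $M(a_i)\in\{c_i,d_i\}$, $b_i$ is matched, $\{a_i,b_i\}$ is a $+2$ edge, and symmetrically on the $j$-side.

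With these in hand, the heart of the argument is a short case analysis on the corridor vertices $u^e_i,u^e_j$, using one reusable tool which I will call the \emph{corridor lemma}: whenever $\{b_s,u^f_s\}\in M$ for an edge $f=\{s,t\}$, a contradiction follows. Indeed then $\{u^f_s,u^f_t\}\notin M$, so it is labeled $+2$; if $u^f_t$ is unmatched, replacing $\{b_s,u^f_s\}$ by $\{u^f_s,u^f_t\}$ improves both $u$-vertices and worsens only $b_s$, yielding a more popular matching; otherwise $\{b_t,u^f_t\}\in M$, whence $\{a_t,b_t\}\notin M$ is also a $+2$ edge, and the path $M(a_s)-a_s-b_s-u^f_s-u^f_t-b_t-a_t-M(a_t)$ is an alternating path (matching edges at both ends, the two $u$-edges sandwiching the $+2$ edge $\{u^f_s,u^f_t\}$) carrying the two $+2$ edges $\{a_s,b_s\}$ and $\{a_t,b_t\}$, which Proposition~\ref{prop:char} forbids.

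Now I would dispose of the cases. If $u^e_i$ is matched to $b_i$, or $u^e_j$ to $b_j$, the corridor lemma finishes immediately. Otherwise, since neither $u^e_i$ nor $u^e_j$ is matched to a $b$-vertex and they cannot both be unmatched (maximality would let us add $\{u^e_i,u^e_j\}$), we must have $\{u^e_i,u^e_j\}\in M$. Then $b_i$ (which is matched, but not to $a_i$ nor to $u^e_i$) is matched either to some $u^f_i$ with $f\neq e$, in which case the corridor lemma applies to $f$, or to $c_i$; the same holds for $b_j$. In the surviving configuration $\{b_i,c_i\},\{b_j,c_j\}\in M$, which forces $\{a_i,d_i\},\{a_j,d_j\}\in M$, I would use that $c_i$ ranks $a_i$ above $b_i$ so that $\{a_i,c_i\}$ and $\{a_j,c_j\}$ are $+2$ edges, and trace the alternating path $d_i-a_i-c_i-b_i-u^e_i-u^e_j-b_j-c_j-a_j-d_j$ (the two edges $\{b_i,u^e_i\},\{u^e_j,b_j\}$ are labeled $0$ and hence lie in $H_M$), which carries the two $+2$ edges $\{a_i,c_i\},\{a_j,c_j\}$ and is therefore forbidden.

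The hard part will be verifying that each constructed object really is a legal alternating path in $H_M$: that it is simple, that matching and non-matching edges genuinely alternate, that its endpoints are admissible (matched endpoints when the incident edge is in $M$), and that every non-matching edge used is labeled $0$ or $+2$ rather than $-2$, so that it survives in $H_M$. A secondary point to check is that, although $c_i$ and $d_i$ acquire neighbors inside the Pair and Triple Selector gadgets, none of the paths above ever leaves the Edge Coverage gadget together with its corridor, so those extra edges cannot interfere. Since every case yields a contradiction, no edge is uncovered, and $U$ is a vertex cover of $G$.
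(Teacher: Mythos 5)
Your proof is correct and follows essentially the same route as the paper's (Appendix~B): the same preliminary facts about each Edge Coverage gadget (every $a_t$ is matched, $b_t$ cannot be left to a $u$-vertex), and the same final ten-vertex configuration $d_i$--$a_i$--$c_i$--$b_i$--$u^e_i$--$u^e_j$--$b_j$--$c_j$--$a_j$--$d_j$ for an uncovered edge. The only difference is presentational: where you invoke Proposition~\ref{prop:char} via forbidden alternating paths, the paper exhibits the corresponding more popular matching and counts votes directly (e.g.\ the $6$-versus-$4$ count at the end), and indeed your alternating paths are exactly the symmetric differences of the paper's swaps.
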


\paragraph{Proof that $U$ is a solution.} Since we have already established that $U$ is a vertex cover, the proof that $U$ is a solution will follow from the correctness of the two following lemmas.

\begin{lemma}
For every $P\in {\cal P}$, it holds that $|U\cap P|=1$.
\end{lemma}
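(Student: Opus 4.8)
The plan is to prove $|U\cap P|=1$ by establishing the two inequalities $|U\cap P|\ge 1$ and $|U\cap P|\le 1$ separately. Write $P=\{i,j\}$. For the lower bound, recall that by definition of \pvc\ the pair $P$ is an edge of $G$; since Lemma \ref{lem:vc} tells us that $U$ is a vertex cover of $G$, at least one of $i,j$ must belong to $U$, and hence $|U\cap P|\ge 1$.

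The substance of the argument is the upper bound, which I would prove by contradiction: assume both $i,j\in U$, so that $\{a_i,b_i\},\{a_j,b_j\}\in M$. The first step is a structural observation about the Pair Selector gadget of $P$. Because $i$ and $j$ lie in a pair, they lie in no triple, so no Triple Selector edge is incident to any of $c_i,d_i,c_j,d_j$; consequently the only neighbors in $H$ of the six vertices $c_i,d_i,c_j,d_j,f_{ij},f_{ji}$ are the vertices $a_i,b_i,a_j,b_j$ together with one another. Since $a_i,b_i,a_j,b_j$ are already matched among themselves, each of these six vertices can only be matched inside the gadget, and the available edges split into the two vertex-disjoint ``troublemaker triangles'' $\{c_i,d_j,f_{ji}\}$ and $\{c_j,d_i,f_{ij}\}$.

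I would then derive a contradiction from a single triangle, say $T=\{c_i,d_j,f_{ji}\}$. A triangle has matching number one, so $M$ uses at most one of its edges; and since $M$ is maximal by Observation \ref{obs:maximal} (and the vertices of $T$ can only be matched inside $T$), $M$ uses \emph{exactly} one edge of $T$, leaving exactly one vertex of $T$ unmatched. The preferences on $T$ are cyclic---$c_i$ prefers $f_{ji}$ to $d_j$, $f_{ji}$ prefers $d_j$ to $c_i$, and $d_j$ prefers $c_i$ to $f_{ji}$---so for any of the three possible matched edges $\{p,q\}$, with $r$ the unmatched vertex of $T$, exactly one endpoint, say $q$, prefers $r$ to its current partner $p$ (in a cyclic $3$-preference, for each pair precisely one member points to the third vertex). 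Define $M'$ by deleting $\{p,q\}$ from $M$ and inserting $\{q,r\}$. This is again a matching, and $M$ and $M'$ agree outside $T$; the only vertices whose status changes are $q$ (strictly better), $r$ (from unmatched to matched, hence strictly better), and $p$ (now unmatched, hence strictly worse). Therefore $\vote(M',M)-\vote(M,M')=2-1>0$, contradicting the popularity of $M$ via Definition \ref{def:popularity}. This gives $|U\cap P|\le 1$, and combined with the lower bound we conclude $|U\cap P|=1$.

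The main obstacle, I expect, is not the rotation argument itself---that is the direct realization of the troublemaker-triangle intuition---but rather the structural step of confirming that the six gadget vertices are genuinely confined to the two triangles once $i,j\in U$; this hinges on the disjointness of ${\cal P}$ and ${\cal T}$, which guarantees that no additional (Triple Selector) edges leak into the Pair Selector gadget and spoil the count.
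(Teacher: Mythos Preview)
Your proof is correct and follows essentially the same approach as the paper: both use Lemma~\ref{lem:vc} for the lower bound, assume $\{a_i,b_i\},\{a_j,b_j\}\in M$ for contradiction, invoke maximality to force exactly one edge of the triangle $\{c_i,d_j,f_{ji}\}$ into $M$, and then exploit the cyclic preferences on that triangle to exhibit a more popular matching. The only cosmetic difference is that you package the three rotations into a single uniform argument, whereas the paper spells out each of the three cases explicitly.
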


\begin{proof}
Let us consider some arbitrary pair $P=\{i,j\}\in {\cal P}$. By Lemma \ref{lem:vc}, and because a pair is also an edge in $G$, we have that $|U\cap P|\geq 1$. Thus, to prove the lemma, it suffices to show that it is not possible to have $|U\cap P|=2$. To this end, suppose by way of contradiction that $|U\cap P|=2$. By the definition of $U$, both $\{a_i,b_i\}\in M$ and $\{a_j,b_j\}\in M$. Note that the only neighbors of $c_i$ besides $a_i$ and $b_i$ are $f_{ji}$ and $d_j$, the only neighbors of $f_{ji}$ are $c_i$ and $d_j$, and the only neighbors of $d_j$ besides $a_j$ are $c_i$ and $f_{ij}$. Thus, by Observation \ref{obs:maximal}, $M$ must contain exactly one of the edges $\{c_i,d_j\}$, $\{d_j,f_{ji}\}$ and $\{f_{ji},c_i\}$. If $\{c_i,d_j\}\in M$, then by replacing this edge by $\{f_{ji},c_i\}$, we obtain a more popular matching (both $c_i$ and $f_{ji}$ vote in favor of the replacement, while only $d_j$ votes against it). If $\{d_j,f_{ji}\}\in M$, then by replacing this edge by $\{c_i,d_j\}$, we obtain a more popular matching (both $c_i$ and $d_j$ vote in favor of the replacement, while only $f_{ji}$ votes against it). If $\{f_{ji},c_i\}\in M$, then by replacing this edge by $\{d_j,f_{ji}\}$, we obtain a more popular matching (both $d_j$ and $f_{ji}$ vote in favor of the replacement, while only $c_i$ votes against it). Since every case led to a contradiction, the proof is complete.
\end{proof}

\begin{lemma}
For every $T\in {\cal T}$, it holds that $|U\cap T|=2$.
\end{lemma}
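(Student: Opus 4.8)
The plan is to prove that for every $T=\{i,j,k\}\in{\cal T}$ we have $|U\cap T|=2$. By Lemma \ref{lem:vc}, $U$ is a vertex cover of $G$; since $\{i,j,k\}$ induces a triangle in $G$, covering its three edges forces at least two of its vertices into $U$, so $|U\cap T|\geq 2$. Hence it suffices to rule out the possibility $|U\cap T|=3$, which I will do by assuming $i,j,k\in U$ and deriving a contradiction to the popularity of $M$.

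First I would record the structural consequence of the assumption. By the definition of $U$, we have $\{a_i,b_i\},\{a_j,b_j\},\{a_k,b_k\}\in M$. I would then observe that $d_i,d_j,d_k$ are thereby freed from their $a$-partners, and catalogue the only remaining neighbors available to each of them inside the Triple Selector gadget. From the construction, the only neighbors of $d_i$ besides $a_i$ are $d_j$ and $d_k$ (and similarly, cyclically, for $d_j$ and $d_k$), so the inner triangle on $\{d_i,d_j,d_k\}$ is exactly the ``troublemaker triangle'' highlighted in the intuition for this gadget. By Observation \ref{obs:maximal}, $M$ is maximal, so it cannot leave two adjacent vertices of this triangle both unmatched; together with the fact that $a_i,a_j,a_k$ are already matched, this forces $M$ to contain exactly one edge of the inner triangle $\{d_i,d_j\},\{d_j,d_k\},\{d_k,d_i\}$.

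The heart of the argument is then a case analysis mirroring the Pair Selector lemma: in each of the three cases, I exhibit a single-edge swap that produces a more popular matching, contradicting popularity of $M$. For instance, if $\{d_i,d_j\}\in M$, then $d_k$ is left unmatched (its $a$-partner is taken and its two inner-triangle neighbors are now occupied by the edge $\{d_i,d_j\}$), and since $d_j$ prefers $d_k$ over $d_i$ by the chosen orientation of the inner triangle, replacing $\{d_i,d_j\}$ by $\{d_j,d_k\}$ makes both $d_j$ and $d_k$ better off while only $d_i$ is worse off, yielding $\vote(M',M)-\vote(M,M')\geq 1>0$. The symmetry of the Triple Selector gadget under cyclic shifts (noted in its intuition) means the remaining two cases $\{d_j,d_k\}\in M$ and $\{d_k,d_i\}\in M$ are handled identically after relabeling, so I need only spell out one case in full. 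Since every case contradicts the popularity of $M$, the assumption $|U\cap T|=3$ is untenable, and combined with $|U\cap T|\geq 2$ we conclude $|U\cap T|=2$.

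The main obstacle to watch is verifying that the swap affects \emph{no other vertex} of $H$ and that the freed vertex is genuinely unmatched — i.e., that the three edges $\{a_i,b_i\},\{a_j,b_j\},\{a_k,b_k\}$ being in $M$ really does isolate the inner triangle from the rest of $H$ via the neighbor catalogue, so that the outer triangle on $\{c_i,c_j,c_k\}$ and all $a$-, $b$-, $c$-vertices are untouched by the replacement. Once this locality is confirmed, the vote count in each case is an immediate three-vertex tally (two in favor, one against), so no further subtlety arises.
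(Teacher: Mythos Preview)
Your proposal is correct and follows essentially the same approach as the paper: use Lemma~\ref{lem:vc} to get $|U\cap T|\geq 2$, assume $|U\cap T|=3$ so that $\{a_i,b_i\},\{a_j,b_j\},\{a_k,b_k\}\in M$, observe that the only remaining neighbors of each $d$-vertex lie on the inner triangle so maximality forces exactly one inner-triangle edge into $M$, and then perform the three-case swap argument (two votes for, one against) to contradict popularity. The paper spells out all three cases explicitly rather than invoking the cyclic symmetry, but the content is identical.
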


\begin{proof}
Let us consider some arbitrary triple $T=\{i,j,k\}\in {\cal T}$. By Lemma \ref{lem:vc}, and because a triple is also a triangle in $G$, we have that $|U\cap T|\geq 2$. Thus, to prove the lemma, it suffices to show that it is not possible to have $|U\cap T|=3$. To this end, suppose by way of contradiction that $|U\cap T|=3$. By the definition of $U$, all the three edges $\{a_i,b_i\}$, $\{a_j,b_j\}$ and $\{a_k,b_k\}$ belong to $M$. Note that the only neighbors of $d_i$ besides $a_i$ are $d_j$ and $d_k$, the only neighbors of $d_j$ besides $a_j$ are $d_i$ and $d_k$, and the only neighbors of $d_k$ besides $a_k$ are $d_i$ and $d_j$. Thus, by Observation \ref{obs:maximal}, $M$ must contain exactly one of the edges $\{d_i,d_j\}$, $\{d_j,d_k\}$ and $\{d_k,d_i\}$. If $\{d_i,d_j\}\in M$, then by replacing this edge by $\{d_j,d_k\}$, we obtain a more popular matching (both $d_j$ and $d_k$ vote in favor of the replacement, while only $d_i$ votes against it). If $\{d_j,d_k\}\in M$, then by replacing this edge by $\{d_k,d_i\}$, we obtain a more popular matching (both $d_i$ and $d_k$ vote in favor of the replacement, while only $d_j$ votes against it). If $\{d_k,d_i\}\in M$, then by replacing this edge by $\{d_i,d_j\}$, we obtain a more popular matching (both $d_i$ and $d_j$ vote in favor of the replacement, while only $d_k$ votes against it). Since every case led to a contradiction, the proof is complete.
\end{proof}

\bibliographystyle{siam}
\bibliography{references}

\appendix
\section{Proof of Lemma \ref{lem:partVC}}\label{app:partVC}

In an instance of {\sc 3-SAT}, we are given a set of variables $X$, and a formula encoded as a collection of clauses ${\cal C}$. Each clause $C\in{\cal C}$ is a set of exactly three literals, where each literal is either a variable $x\in X$ or the negation of a variable $x\in X$ that is denoted by $\overline{x}$. A truth assignment $\alpha: X\rightarrow\{\mathsf{T},\mathsf{F}\}$ satisfies a literal $\ell$ if either it is positive and assigned truth, or negative and assigned false. Now, $\alpha$ satisfies a clause if it satisfies at least one of its literals, and it satisfies ${\cal C}$ if it satisfy every clause in $\cal C$. The objective is to decide whether there exists a truth assignment that satisfies $\cal C$.

The {\sc 3-SAT} problem is \NPH, and below we give the well-known classic reduction from {\sc 3-SAT} to {\sc Vertex Cover} that shows that {\sc Vertex Cover} is \NPH. This result is summarized in Proposition \ref{prop:normalVC}. Afterwards, we argue how the instance outputted by the reduction can be viewed as an instance of \pvc, and thus conclude the proof of Lemma \ref{lem:partVC}.

\paragraph{Reduction from {\sc 3-SAT} to \pvc.} Let $I=(X,{\cal C})$. Then, we construct an instance $\red(I)=(G,k)$ of {\sc Vertex Cover} as follows. First, define $k=|X|+2|{\cal C}|$. Now, for every $x\in X$, add two new vertices $v_x$ and $v_{\overline{x}}$, along with the edge $\{v_x,v_{\overline{x}}\}$, to $G$. For every clause $C=\{p,q,r\}\in{\cal C}$, add three new vertices $u^C_p,u^C_q$ and $u^C_r$, along with the edges $\{u^C_p,u^C_q\}$, $\{u^C_q,u^C_r\}$ and $\{u^C_r,u^C_p\}$ to $G$. Finally, for every $C\in{\cal C}$ and $\ell\in C$, add the edge $\{u^C_\ell,v_\ell\}$ to $G$. It is easy to verify that the following result holds (see, e.g., \cite{sipser2006}).

\begin{proposition}[\cite{sipser2006}]\label{prop:normalVC}
Let $I=(X,{\cal C})$ be an instance of {\sc 3-SAT}. Then, $I$ has a satisfying assignment if and only if $G$ has a vertex cover of size at most $k$ where $(G,k)=\red(I)$.
\end{proposition}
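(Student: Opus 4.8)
The plan is to prove both directions of the equivalence by exploiting the rigid structure that the budget $k=|X|+2|{\cal C}|$ imposes on vertex covers of $G$. The graph $G$ decomposes into $|X|$ pairwise vertex-disjoint variable edges $\{v_x,v_{\overline{x}}\}$, $|{\cal C}|$ pairwise vertex-disjoint clause triangles on $\{u^C_p,u^C_q,u^C_r\}$, and the connecting edges $\{u^C_\ell,v_\ell\}$. The first fact I would record is a matching-style lower bound: covering a single edge needs at least one of its endpoints, and covering all three edges of a triangle needs at least two of its three vertices. Because the variable edges and clause triangles are pairwise vertex-disjoint and together contain all vertices of $G$, every vertex cover $U$ must contain at least $|X|$ of the $v$-vertices and at least $2|{\cal C}|$ of the $u$-vertices, so $|U|\ge k$ always.

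For the forward direction I would start from a satisfying assignment $\alpha$ and build a cover of size exactly $k$. For each variable $x$ I place $v_x$ into $U$ if $\alpha(x)=\mathsf{T}$ and $v_{\overline{x}}$ otherwise; this covers every variable edge and contributes $|X|$ vertices. For each clause $C$ I pick a literal $\ell\in C$ that $\alpha$ satisfies (one exists since $\alpha$ satisfies $C$) and place into $U$ the two clause vertices other than $u^C_\ell$, contributing $2|{\cal C}|$ vertices. I would then verify coverage: the two chosen $u$-vertices cover all three triangle edges and the two connecting edges incident to them, while the remaining connecting edge $\{u^C_\ell,v_\ell\}$ is covered by $v_\ell$, which lies in $U$ precisely because $\alpha$ satisfies $\ell$. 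Hence $U$ is a vertex cover of size $k$.

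For the reverse direction I would take a vertex cover $U$ with $|U|\le k$ and combine it with the lower bound above to force $|U|=k$, and crucially to force every inequality to be tight: exactly one endpoint of each variable edge and exactly two vertices of each clause triangle lie in $U$. I would then define $\alpha(x)=\mathsf{T}$ iff $v_x\in U$; this is well defined because exactly one of $v_x,v_{\overline{x}}$ is selected. To see that $\alpha$ satisfies a clause $C$, I use that exactly one clause vertex, say $u^C_\ell$, is left out of $U$, so its connecting edge $\{u^C_\ell,v_\ell\}$ must be covered from the other endpoint, forcing $v_\ell\in U$; this is exactly the statement that $\alpha$ satisfies the literal $\ell$, and hence the clause.

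The main (indeed the only subtle) obstacle is the tightness argument in the reverse direction: one must rule out the budget $k$ being spent unevenly, e.g.\ taking all three vertices of some triangle and compensating by under-covering elsewhere. This is where the pairwise disjointness of the variable edges and clause triangles is essential, since it lets the per-gadget lower bounds be summed with no double counting, so that $|U|=k$ collapses every local bound to equality. Once tightness is in hand, identifying the unique uncovered clause vertex and following its connecting edge to a satisfied literal is immediate, completing the proof.
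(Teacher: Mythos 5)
Your proposal is correct and is precisely the standard argument for this classic reduction: the paper itself omits the proof, deferring to \cite{sipser2006}, and the surrounding appendix text relies on exactly the same counting facts you use (at least one vertex per variable edge, at least two per clause triangle, summed without double counting by disjointness). Both directions are complete, including the tightness argument forcing $|U|=k$ in the reverse direction, so nothing is missing.
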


\paragraph{The viewpoint of \pvc.} Given the output instance $(G,k)$ of the reduction, we define ${\cal P}=\{\{v_x,v_{\overline{x}}: x\in X\}$ and ${\cal T}=\{\{u^C_p,u^C_q,u^C_r\}: C=\{p,q,r\}\in{\cal C}\}$. Clearly, the sets in ${\cal P}\cup{\cal T}$ are pairwise disjoint, every set in ${\cal P}$ is an edge in $G$, and every set in ${\cal T}$ induces a triangle in $G$. Moreover, every vertex cover of $G$ must select at least one vertex of each edge in $E(G)$, and at least two vertices of every triangle in $G$. Since $|{\cal P}|=|X|$ and $|{\cal T}|=2|{\cal C}|$, this means that $G$ has a vertex cover of size at most $k$ if and only if $G$ has a vertex cover of size exactly $k$, and the latter statement holds if and only if $(G,{\cal P},{\cal T})$ has a solution. By Proposition \ref{prop:normalVC}, this concludes the proof of Lemma \ref{lem:partVC}.

\section{Proof of Lemma \ref{lem:vc}}\label{app:vc}

Towards the proof of Lemma \ref{lem:vc}, we first prove the following claim.

\begin{claim}\label{claim:configuration}
For every $i\in V(G)$, we have that
\begin{itemize}
\item $\{a_i,b_i\}\in M$, or
\item both $\{a_i,d_i\}\in M$ and $\{b_i,c_i\}\in M$.
\end{itemize}
\end{claim}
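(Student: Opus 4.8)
The plan is to fix a popular matching $M$ and an arbitrary $i \in V(G)$, and to analyze $M$ locally around the core vertices $a_i, b_i, c_i, d_i$ of the Edge Coverage gadget. The crucial feature to exploit is that $a_i$ has exactly three neighbours in $H$, namely $b_i$, $c_i$, $d_i$, with $\ell_{a_i}(b_i)=1 < \ell_{a_i}(c_i)=2 < \ell_{a_i}(d_i)=3$, while $b_i$ ranks $a_i$ strictly first. By Observation \ref{obs:maximal}, $M$ is maximal, so it suffices to split into cases according to $M(a_i) \in \{b_i, c_i, d_i, a_i\}$ (the last meaning $a_i$ is unmatched). If $\{a_i,b_i\} \in M$ we are in the first bullet and there is nothing to prove; the remaining work is to rule out that $a_i$ is unmatched or matched to $c_i$, and, when $\{a_i,d_i\}\in M$, to force $\{b_i,c_i\}\in M$.

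First I would dispose of the situations in which $b_i$ is not matched to a vertex of the form $u^e_i$ by a direct ``local'' swap. If $a_i$ is unmatched then, by maximality, $b_i$ is matched to some $w$; since $b_i$ ranks $a_i$ first, replacing $\{b_i,w\}$ by $\{a_i,b_i\}$ makes $a_i$ and $b_i$ better off while only $w$ is worse off, contradicting popularity. If $\{a_i,c_i\}\in M$ (resp.\ $\{a_i,d_i\}\in M$) and $b_i$ is unmatched, then replacing the edge $\{a_i,M(a_i)\}$ by $\{a_i,b_i\}$ is again a $2$-to-$1$ improvement, as $a_i$ prefers $b_i$ to $c_i$ (resp.\ $d_i$) and $b_i$ becomes matched. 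In the case $\{a_i,d_i\}\in M$ these observations already leave $b_i$ matched either to $c_i$ --- which is the desired second bullet --- or to some $u^e_i$, which is handled next.

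The main obstacle is exactly the case where $b_i$ is matched to some $u^e_i$ with $e=\{i,j\}\in E(G)$, occurring both under $\{a_i,c_i\}\in M$ and under $\{a_i,d_i\}\in M$. Here a naive swap inside the core gadget only yields a tie ($2$ votes each way), so I would instead invoke the characterization of Proposition \ref{prop:char}. Since $a_i$ is matched to $c_i$ or $d_i$, both ranked below $b_i$, and $b_i$ ranks $a_i$ first, the edge $\{a_i,b_i\}$ is labelled $+2$ by $\lab_M$. Moreover $u^e_i$ ranks $u^e_j$ above its partner $b_i$, while $u^e_j$, whose only neighbours are $u^e_i$ and $b_j$, is either matched to $b_j$ or unmatched and in either case prefers $u^e_i$ to its status in $M$; hence $\{u^e_i,u^e_j\}$ is also labelled $+2$ by $\lab_M$. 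Concatenating the matching edge $\{M(a_i),a_i\}$, the $+2$ edge $\{a_i,b_i\}$, the matching edge $\{b_i,u^e_i\}$, the $+2$ edge $\{u^e_i,u^e_j\}$, and (if it exists) the matching edge $\{u^e_j,b_j\}$ yields an alternating path in $H_M$ containing two edges labelled $+2$ (the endpoint conditions hold since the extremal edges are matching edges, or the path ends at the unmatched vertex $u^e_j$). By the third condition of Proposition \ref{prop:char} this contradicts the popularity of $M$. Thus $b_i$ is never matched to a $u$-vertex, which rules out $\{a_i,c_i\}\in M$ entirely and, together with the previous paragraph, forces $\{b_i,c_i\}\in M$ whenever $\{a_i,d_i\}\in M$, completing the proof.
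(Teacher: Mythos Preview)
Your proof is correct. The overall case split matches the paper's, but you handle the critical subcase where $b_i$ is matched to some $u^e_i$ differently: the paper exhibits an explicit more popular matching (remove $\{a_i,M(a_i)\}$, $\{b_i,u^e_i\}$ and possibly $\{u^e_j,b_j\}$, then add $\{a_i,b_i\}$ and $\{u^e_i,u^e_j\}$, yielding four votes for and at most two against), whereas you invoke Proposition~\ref{prop:char} by exhibiting an alternating path in $H_M$ containing two $+2$ edges. The two arguments are of course two sides of the same coin --- the characterization in Proposition~\ref{prop:char} is precisely the distilled form of such swap arguments --- but your route is slightly cleaner in that it avoids the case analysis on whether $\{u^e_j,b_j\}\in M$ when tallying votes, at the price of appealing to the characterization.
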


\begin{proof}
Consider some $i\in V(G)$. First, note that $a_i$ is the top preference of every vertex among its neighbors. Therefore, if $a_i$ is not matched by $M$, then we can match it to any of its neighbors (so if that neighbor was matched, its former matching partner is now unmatched), and get two votes in favor of the change and at most one vote against it, which means that $M$ is not popular. Therefore, $a_i$ must be matched by $M$.

Let us now rule out the possibility that $a_i$ is matched to $c_i$ by $M$. If $a_i$ is matched to $c_i$, then $b_i$ must be matched to $u^e_i$ for some edge $e$ incident to $i$, as otherwise by removing $\{a_i,c_i\}$ from $M$ and adding $\{a_i,b_i\}$ to it instead, we obtain a more popular matching. Denote $e=\{i,j\}$. Then, by removing $\{a_i,c_i\}$, $\{b_i,u^e_i\}$ and $\{u^e_j,b_j\}$ (if $\{u^e_j,b_j\}\in M$), and adding $\{a_i,b_i\}$ and $\{u^e_i,u^e_j\}$, we get four votes in favor of the change (from $a_i, b_i, u^e_i$ and $u^e_j$) and at most two votes against it (from $c_i$ and possibly $b_j$), which contradicts the popularity of $M$.

It remains to show that if $a_i$ is matched to $d_i$, then $b_i$ is matched to $c_i$. To this end, suppose that $a_i$ is matched to $d_i$. If $b_i$ is unmatched, then by removing $\{a_i,d_i\}$ and adding $\{b_i,a_i\}$, we obtain a more popular matching. Thus, if $b_i$ is not matched to $c_i$, then it must be matched to $u^e_i$ for some edge $e=\{i,j\}$ incident to $i$. In this case, where $b_i$ is matched to $u^e_i$, by removing $\{a_i,d_i\}$, $\{b_i,u^e_i\}$ and $\{u^e_j,b_j\}$ (if $\{u^e_j,b_j\}\in M$), and adding $\{a_i,b_i\}$ and $\{u^e_i,u^e_j\}$, we get four votes in favor of the change (from $a_i,b_i,u^e_i$ and $u^e_j$) and at most two votes against it (from $d_i$ and possibly $b_j$), which contradicts the popularity of $M$.
\end{proof}

We now proceed with the proof of the lemma. To this end, let $\{i,j\}\in E(G)$ be an arbitrarily chosen edge. To prove that this edge is covered by $U$, we need to show that at least one edge among $\{a_i,b_i\}$ and $\{a_j,b_j\}$ is in $M$. Suppose, by way of contradiction, that this statement is false. Then, by Claim \ref{claim:configuration}, it holds that $\{a_i,d_i\},\{b_i,c_i\},\{a_j,d_j\},\{b_j,c_j\}\in M$. In this case, $\{u^e_i,u^e_j\}$ must be in $M$ (else they are not matched, which contradicts Observation \ref{obs:maximal}). Then, we remove $\{a_i,d_i\},\{b_i,c_i\},\{u^e_i,u^e_j\},\{a_j,d_j\}$ and $\{b_j,c_j\}$ from $M$, and add $\{a_i,c_i\},\{b_i,u^e_i\},\{a_j,c_j\}$ and $\{b_j,u^e_j\}$ to $M$. Then, we gain six votes in favor of the replacement (from $a_i,a_j,b_i,b_j,c_i$ and $c_j$) and only four votes against it (from $d_i,d_j,u^e_i$ and $u^e_j$), which contradicts the popularity of $M$. This completes the proof of the lemma.

\end{document}